\newtheorem{thm}{Theorem}
\newtheorem{lem}{Lemma}
\newcommand{\red}[1]{\textcolor{black}{#1}}
\newcommand{\redd}[1]{\textcolor{black}{#1}}
\newcommand{\black}[1]{\textcolor{black}{#1}}
\renewcommand*{\numberline}[1]{\hb@xt@1em{#1\hfil}} 
\begin{document}

\title{Universal cost bound of quantum error mitigation based on quantum estimation theory}

\author{Kento Tsubouchi}
\email{tsubouchi@noneq.t.u-tokyo.ac.jp}
\affiliation{Department of Applied Physics, University of Tokyo, 7-3-1 Hongo, Bunkyo-ku, Tokyo 113-8656, Japan}

\author{Takahiro Sagawa}
\affiliation{Department of Applied Physics, University of Tokyo, 7-3-1 Hongo, Bunkyo-ku, Tokyo 113-8656, Japan}
\affiliation{Quantum-Phase Electronics Center (QPEC), The University of Tokyo, Tokyo 113-8656, Japan}

\author{Nobuyuki Yoshioka}
\email{nyoshioka@ap.t.u-tokyo.ac.jp}
\affiliation{Department of Applied Physics, University of Tokyo, 7-3-1 Hongo, Bunkyo-ku, Tokyo 113-8656, Japan}
\affiliation{Theoretical Quantum Physics Laboratory, RIKEN Cluster for Pioneering Research (CPR), Wako-shi, Saitama 351-0198, Japan}
\affiliation{JST, PRESTO, 4-1-8 Honcho, Kawaguchi, Saitama, 332-0012, Japan}

\begin{abstract}
    We present a unified approach to analyzing the cost of various quantum error mitigation methods on the basis of quantum estimation theory.
    By analyzing the quantum Fisher information matrix of a virtual quantum circuit that effectively represents the operations of quantum error mitigation methods, we derive for a generic layered quantum circuit under a wide class of Markovian noise that, unbiased estimation of an observable encounters an exponential growth with the circuit depth in the lower bound on the measurement cost.
    Under the global depolarizing noise, we in particular find that the bound can be asymptotically saturated by merely rescaling the measurement results.
    Moreover, we prove for random circuits with local noise that the cost grows exponentially also with the qubit count.
    Our numerical simulations support the observation that, even if the circuit has only linear connectivity, such as the brick-wall structure, each noise channel converges to the global depolarizing channel with its strength growing exponentially with the qubit count.
    This not only implies the exponential growth of cost both with the depth and qubit count, but also validates the rescaling technique for sufficiently deep quantum circuits.
    Our results contribute to the understanding of the physical limitations of quantum error mitigation and offer a new criterion for evaluating the performance of quantum error mitigation techniques.
\end{abstract}
	
\maketitle

\emph{Introduction.---}
One of the central problems in quantum technology is to establish control and understanding of unwanted noise, since an accumulation of errors may eventually spoil the practical advantage of quantum devices.
In the case of quantum computing, an elegant framework of quantum error correction has been developed as a fundamental countermeasure~\cite{shor1995scheme,knill1996threshold, aharanov2008fault, lidar2013quantum, nielsen2002quantum, ofek2016extending, krinner2022realizing, zhao2022realization}, while it remains years to decades ahead when we can reliably implement provably advantageous quantum algorithms.
A realistic and powerful alternative for near-future devices is to employ the art of quantum error mitigation (QEM); instead of consuming an excessive number of qubits to {\it correct} the bias caused by noise via interleaved measurement and feedback, we aim to {\it mitigate} their effect via appropriate post-processing in trade of an increased number of measurements.

A wide variety of QEM methods have been proposed: zero-noise extrapolation~\cite{li2017efficient,temme2017error,kandala2019error}, probabilistic error cancellation~\cite{temme2017error, endo2018practical, berg2022probabilistic}, virtual distillation~\cite{huggins_virtual_2021, koczor_exponential_2021, huo2022dualstate, czarnik2021qubit}, (generalized) quantum subspace expansion~\cite{mcclean_2017, mcclean2020decoding, yoshioka2022variational, yoshioka2022generalized}, symmetry verification/expansion~\cite{bonet-monroig2018lowcost, mcardle2019error, cai2021quantumerror}, and learning-based error mitigation~\cite{Czarnik2021errormitigation, strikis2021learning}, to name a few~(Refer to Ref.~\cite{endo2021hybrid, cai2022quantum} for review). 
The growing number of demonstrations by both numerical and experimental means shows that the QEM has become vital~\cite{kandala2019error,sagastizabal2019experimental, sun2021mitigating, zhang2020error}.
Meanwhile, there are so far only a few guiding principles to choose from existing QEM methods~\cite{takagi2022fundamental, wang2021can}, due to the limited theoretical understanding of their fundamental aspects.
It is an urgent task to understand what is the limit of QEM, in particular, the required  resource to recover the desired quantum circuit output.

We find that quantum estimation theory provides a powerful tool to address this problem.
Quantum estimation theory claims that, given an unbiased estimator of a physical observable, its estimation uncertainty can be characterized by the quantum Fisher information~\cite{helstrom1969quantum, holevo2011probabilistic, hayashi2006quantum}.
For example, the sampling cost for constructing an unbiased estimator for \emph{noiseless} quantum states from measurements in \emph{noisy} quantum states can be bounded using the quantum Fisher information~\cite{watanabe2010optimal}.
While this strongly implies that the quantum estimation theory yields a tool to analyze the trade-off cost \black{to recover the} desired quantum operation, it has remained totally unknown how to investigate realistic computation models such as quantum circuits, in which the \black{holistic effect of the error} cannot be expressed by a single noise channel in general.
Moreover, since QEM methods are mostly not purely classical post-processing but also require additional quantum operations, the existing framework is not straightforwardly applicable.

In this Letter, we aim to fill these gaps by extending the applicability of quantum estimation theory.
By analyzing the quantum Fisher information matrix of an enlarged virtual quantum circuit which translates the operations of QEM methods, we show that the lower bound of the sampling cost for unbiased QEM grows exponentially with the circuit depth $L$ for a generic layered quantum circuit under a wide class of noise~(Theorem~\ref{thm_1}). Furthermore, for random layered circuits under local noise, we show that the cost grows exponentially also with the qubit count $n$~(Theorem~\ref{thm_2}).
We have also numerically verified that noise channels in the large depth regime may be effectively described by the global depolarizing channel whose strength grows exponentially with $n$, for which we provide an optimal technique to suppress the effect of noise.
\red{These results surpass some prior work suggesting some exponential growth (not necessarily the sampling cost of QEM) under the local depolarizing noise~\cite{aharonov1996limitations, takagi2022fundamental, wang2021can} from both theoretical and practical points of view:}
our result not only provides the first mathematical proof for a necessary condition for unbiased QEM under a wide range of noise, but also provides practical guidelines toward cost-optimal QEM.



\emph{Problem setup.---}
Analysis of sample complexity via the quantum estimation theory assumes operations to be expressed as quantum channels. Therefore, it is beneficial to embed QEM operations into a quantum circuit.
Below, we first define a noiseless and noisy layered quantum circuit, and then present the concept of a virtual quantum circuit that encodes QEM operations.

Let $\hat{\rho} = \mathcal{U}_L \circ \cdots \circ \mathcal{U}_1(\hat{\rho}_0)$ ($\mathcal{U}_l(\cdot) = \hat{U}_l\cdot \hat{U}_l^\dagger$) be an unknown \red{$n$-qubit} target state generated from $L$ layers of noiseless unitary gates $\{\mathcal{U}_l\}_{l=1}^L$ operating on an initial state $\hat{\rho}_0$. The target state $\hat{\rho}$ can be
 parameterized by the generalized Bloch vector~\cite{kimura2003bloch} $\vb*{\theta} \in \mathbb{R}^{4^{n}-1}$ as
\begin{eqnarray}
    \hat{\rho} = \frac{1}{2^n} \hat{I}  + 2^{(-1-n)/2}\vb*{\theta} \cdot \hat{\vb*{P}},
\end{eqnarray}
where $\hat{I} \equiv \hat{\sigma}_0^{\otimes n}$ and $\hat{\vb*{P}} = \{\hat{P}_i\}_{i=1}^{2^{2n}-1}$ is an array of non-trivial tensor product of Pauli operators $\hat{P}_i \in \qty{\hat{\sigma}_0, \hat{\sigma}_x, \hat{\sigma}_y, \hat{\sigma}_z}^{\otimes n} \setminus \qty{\hat{\sigma}_0}^{\otimes n}$.

An $n$-qubit noisy layered circuit is defined to have the following structure: (i) noiseless preparation of initial state $\hat{\rho}_0$~\cite{Note2}, (ii) $L$ layers of noisy unitary operations $\qty{\mathcal{E}_l\circ\mathcal{U}_l}_{l = 1}^L$ with $\mathcal{E}_l$ assumed to be a Markovian error, and (iii) noiseless POVM measurement$~\mathcal{M}_0$ aimed to estimate the expectation value of a traceless observable  $\hat{X} = \vb*{x}\cdot\hat{\vb*{P}}$ with $\vb*{x} \in \mathbb{R}^{4^n-1}$.
Each noise channel $\mathcal{E}_l$ maps a generalized Bloch vector as 
\begin{eqnarray}
\mathcal{E}_l : \vb*{\theta} \mapsto A_{l}\vb*{\theta} + \vb*{c}_{l},    
\end{eqnarray}
where $(A_{l})_{ij}=2^{-n}\mathrm{tr}[\hat{P}_i\mathcal{E}_l(\hat{P}_j)]$ is the unital part of the Pauli transfer matrix of $\mathcal{E}_{l}$ and $(\vb*{c}_l)_i =2^{(1-3n)/2}\mathrm{tr}[\hat{P}_i\mathcal{E}_l(\hat{I})]$ quantifies the non-unital action of the noise~\cite{watanabe2010optimal}.
We also define noise strength $\Gamma(\mathcal{E}_{l}) \equiv \norm{A_{l}}^{-1}$ with $\norm{A_{l}} = \max_{\vb*{e}\in \mathbb{R}^{2^{2n}-1}} \frac{\norm{A_{l}\vb*{e}}}{\norm{\vb*{e}}}$ and $\norm{\vb*{e}} = \sum_i\abs{e_i}^2$, which represents the minimal degree of shrinkage of the generalized Bloch sphere caused by $\mathcal{E}_{l}$, and $\gamma = \min_l\qty{\Gamma(\mathcal{E}_{l})}_l$ as the minimal strength among the different noise.

The objective of QEM methods is to remove the effect of the noise channels $\qty{\mathcal{E}_l}_{l=1}^L$ so that we have an unbiased estimator of traceless observable $\hat{X}$, or namely $\ev*{\hat{X}} \equiv \mathrm{tr}[\hat{\rho}\hat{X}] = 2^{(n-1)/2}\vb*{\theta}\cdot\vb*{x}$.
Since the essence of QEM is to run noisy quantum circuits with implementable modifications into the gates, errors, and classical postprocessing, we can construct a {\it virtual} quantum circuit which encompasses the functionality of QEM methods.

As is shown in Fig.~\ref{fig_mitigation_circuit}, 
the virtual circuit involves $N$ copies of noisy layered circuits with three-fold modifications from the original one:
 (boosted) noise $\mathcal{E}_{lm}$ in the $l$-th layer of $m$-th copy such that $\Gamma(\mathcal{E}_{lm})\geq\Gamma(\mathcal{E}_{l})\geq \gamma$,  classical register $\hat{\rho}_{\mathrm{c}, m}$ coupled with the system qubits via the additional operation $\mathcal{C}_{lm}$, and finally, the POVM measurement $\mathcal{M}$ performed on the entire copies to output the estimator of $\ev*{\hat{X}}$.
Classical register ${\hat\rho}_{c, m}$ is initialized with probabilistic mixtures of computational bases as $\hat{\rho}_{\mathrm{c}, m} = \sum_{i}p_{mi}\ketbra{i}$,
and additional operation $\mathcal{C}_{lm}$ performs unitary operation $\mathcal{C}_{lmi}$ according to the state of the classical register as $\mathcal{C}_{lm} = \sum_{i} \mathcal{C}_{lmi}\otimes\ketbra{i}$.
\red{Note that the virtual circuit structure excludes the quantum error correction. This is because we only allow $\mathcal{C}_{lm}$ to be unitary operation according to the state of the classical registers.}
We further describe in SM how various QEM methods can be mapped into this virtual circuit structure~\cite{Note1}.

\begin{figure}[t]
    \begin{center}
        \includegraphics[width=0.9\linewidth]{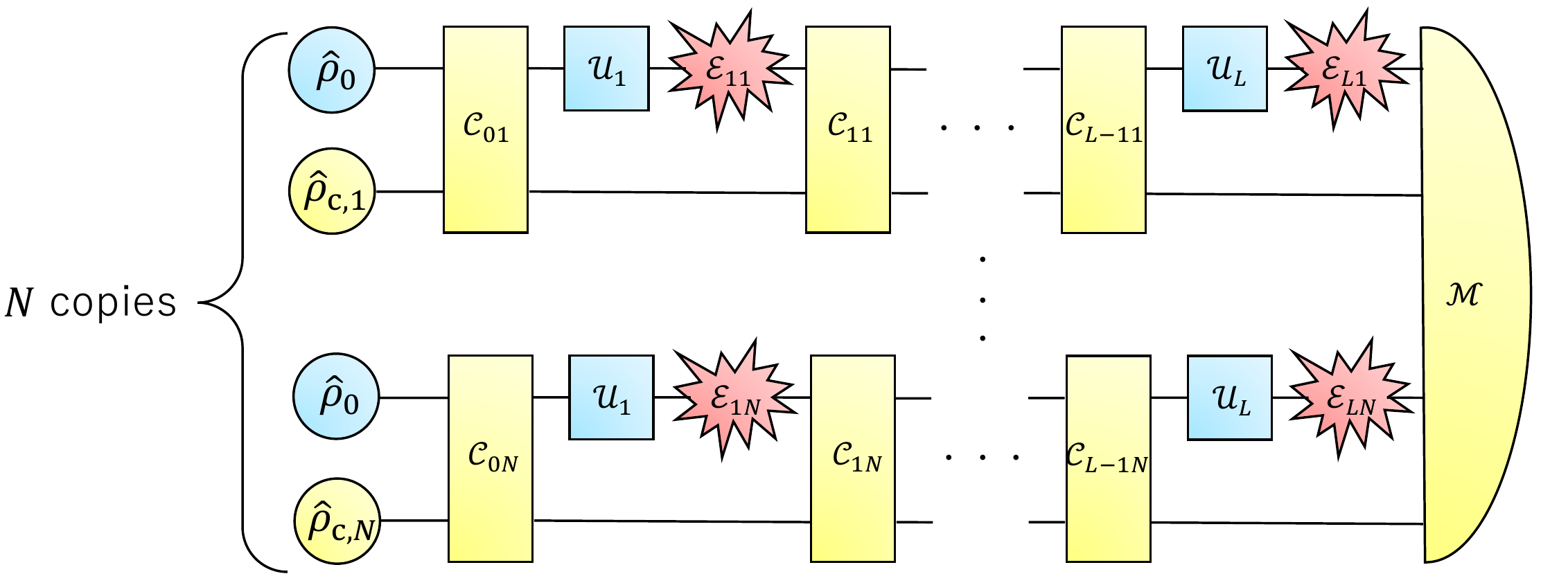}
        \caption{A virtual quantum circuit structure that gives an equivalent representation of most existing QEM methods.
        \black{The blue (red) coloring denotes that the operation is ideal  (noisy), while the operations with yellow coloring explicitly involve QEM operations.}
        }
        \label{fig_mitigation_circuit}
    \end{center}
\end{figure}


The cost of QEM can be defined as the number of copies $N$ of the noisy circuits, or the sample complexity, which roughly can be interpreted as the number of measurements on the actual setup.
Our goal is to derive the lower bound on the cost $N$ required to perform unbiased estimation of $\ev*{\hat{X}}$, by analyzing the evolution of the quantum Fisher information matrix of quantum states generated by the virtual circuit.
Note that the lower bound described below also holds even when we think of measurement error and measurement error mitigation~\cite{kandala2017hardware, heinsoo2018rapid, bravyi2020mitigating}.
This is because noisy measurement followed by the process of measurement error mitigation can be seen as a single POVM measurement.

\emph{Main Results.---}
In order to achieve our goal, we re-express the $m$-th copy of the quantum state in the virtual circuit as
\begin{eqnarray}
    \mathcal{E}'_{m}(\hat{\rho}(\vb*{\theta})\otimes\hat{\rho}_{\mathrm{c}, m}),
\end{eqnarray}
where $\mathcal{E}'_{m}$ is an {\it effective} noise channel defined by compiling all the gates as  $\mathcal{E}'_{m} = \mathcal{E}_{Lm}\circ\mathcal{U}_L\circ\mathcal{C}_{L-1m}\circ\cdots\circ\mathcal{E}_{1m}\circ\mathcal{U}_1\circ\mathcal{C}_{0m}\circ\mathcal{U}_1^{-1}\circ\cdots\circ\mathcal{U}_L^{-1}$.
This compilation allows us to calculate the quantum Fisher information matrix of the state right before the measurement.
To be concrete, we analyze the SLD Fisher information matrix $J$~\cite{helstrom1967minimum} of the quantum state $\bigotimes_{m=1}^N\mathcal{E}'_{m}(\hat{\rho}(\vb*{\theta})\otimes\hat{\rho}_{\mathrm{c}, m})$.


\begin{figure*}[t]
    \begin{center}
        \resizebox{0.9\hsize}{!}{\includegraphics{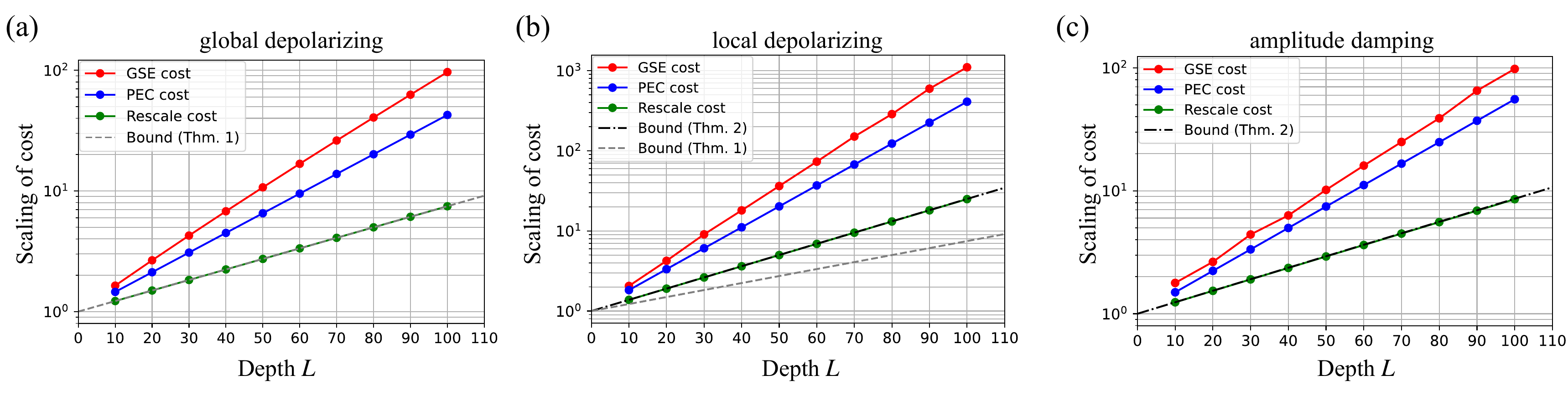}}
        \caption{Scaling of the cost to perform QEM methods for random Clifford circuit of $n=2$ qubits under (a) global depolarizing noise, (b) local depolarizing noise, and (c) local amplitude damping noise with error rate $p=0.01$.
        The red, blue, and green lines denote the sampling overhead of generalized subspace expansion~\cite{yoshioka2022generalized, yang2023dual} using power subspace, the probabilistic error cancellation as derived in Ref.~\cite{takagi2020optimal}, and the rescaling technique as explained in the main text. The rescaling factor is $(1-p)^{-L}$ and $(1-p)^{-3nL4^{n-1}/(4^n-1)}$ for global and local depolarizing noise, and $(1-p)^{-2nL4^{n-1}/(4^n-1)}$ for amplitude damping noise, respectively.
        Bound (Thm. 1) and Bound (Thm. 2) represent the lower bound of the cost obtained from Theorem 1 and Theorem 2, respectively.
        The explicit scaling of Bound (Thm.2) is given in Eq. (8).
        Note that GSE and the rescaling methods do not completely eliminate the errors for (b) and (c), while we confirm a significant reduction of bias. 
        }
        \label{fig_numerics}
    \end{center}
\end{figure*}

We find that $J$ can be bounded as
$J \lesssim \sum_m\Gamma(\mathcal{E}_m')^{-2} I $$\lesssim N\gamma^{-2L}$, which implicates the exponential decay of $J$ with the circuit depth $L$.
By combining this fact with the quantum Cram\'{e}r-Rao inequality, which relates $J$ with the standard deviation $\varepsilon$ of an unbiased estimator~\cite{braunstein1994statistical}, we immediately obtain the following theorem for the cost $N$ of the unbiased QEM (See SM~\cite{Note1} for the proof):
\begin{thm}
    \label{thm_1}
    Suppose that the noise $\mathcal{E}_{lm}$ satisfies the following conditions for all $l$ and $m$:
    \vspace{-5pt}
    \begin{itemize}
        \setlength{\parskip}{0cm} 
        \setlength{\itemsep}{0cm}
        \item[(I)] For all $\hat{\rho} \neq \hat{\sigma}$, $\mathcal{E}_{lm}(\hat{\rho})\neq\mathcal{E}_{lm}(\hat{\sigma})$.
        \item[(II)] For all $\hat{\rho}$, $\mathcal{E}_{lm}(\hat{\rho})$ is full rank, that is, $\mathcal{E}_{lm}(\hat{\rho})$ is a positive definite matrix whose eigenvalues are all greater than zero.
    \end{itemize}
    \vspace{-5pt}
    Then, the cost $N$ required for any unbiased estimator of $\langle \hat{X}\rangle$ with standard deviation $\varepsilon$ constructed from QEM that can be translated into the virtual quantum circuit in Fig. \ref{fig_mitigation_circuit} satisfies
    \begin{eqnarray}
        N \geq \frac{\norm{\vb*{x}}^2}{\varepsilon^2}\beta\gamma^{2L},\label{eq:thm_1}
    \end{eqnarray}
    where $\beta$ is the largest $0 < \beta < 1$ such that $\mathcal{E}_{lm}(\hat{\rho}) - \frac{\beta}{2^n}\hat{I} \geq 0$ for all $\hat{\rho}$, $l$, and $m$.
    Suppose further that the noise $\mathcal{E}_{lm}$ is unital, that is, $\mathcal{E}_{lm}(\frac{\hat{I}}{2^n}) = \frac{\hat{I}}{2^n}$.
    Then, the cost $N$ satisfies
    \begin{eqnarray}
        N \geq \frac{\norm{\vb*{x}}^2}{\varepsilon^2}\qty(1-(1-\beta)^L)\gamma^{2L} \sim \frac{\norm{\vb*{x}}^2}{\varepsilon^2}\gamma^{2L}.
    \end{eqnarray}
\end{thm}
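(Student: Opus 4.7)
The plan is to apply the vector quantum Cram\'{e}r-Rao inequality to the $N$-copy virtual state $\hat{\sigma}_{\mathrm{tot}}(\vb*{\theta})=\bigotimes_{m}\mathcal{E}'_{m}(\hat{\rho}(\vb*{\theta})\otimes\hat{\rho}_{\mathrm{c},m})$, exploit its tensor-product structure to reduce to a single copy, and then bound the per-copy SLD Fisher matrix $J_{m}$ in the Bloch parametrization. Because $\langle\hat{X}\rangle=2^{(n-1)/2}\vb*{\theta}\cdot\vb*{x}$ is linear in $\vb*{\theta}$ with gradient $2^{(n-1)/2}\vb*{x}$, Cram\'{e}r-Rao gives $\varepsilon^{2}\ge 2^{n-1}\vb*{x}^{T}J_{\mathrm{tot}}^{-1}\vb*{x}$, while additivity over tensor factors gives $J_{\mathrm{tot}}=\sum_{m}J_{m}$. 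The task therefore reduces to proving a matrix bound of the form $J_{m}\preceq (2^{n-1}/\beta\gamma^{2L})\,I$, which upon inversion and summation over the $N$ copies reproduces exactly Eq.~\eqref{eq:thm_1}.

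The per-copy bound follows from two Watanabe-style estimates~\cite{watanabe2010optimal}. First, Condition~(II) applied to the final noise layer $\mathcal{E}_{Lm}$ forces every output to satisfy $\mathcal{E}_{Lm}(\hat{\rho})\succeq(\beta/2^{n})\hat{I}$, so the spectral representation $J=\sum_{i,j}2|\langle i|\partial\hat{\sigma}|j\rangle|^{2}/(\lambda_{i}+\lambda_{j})$ yields the diagonal bound $(J_{m})_{kk}\le (2^{n}/\beta)\,\mathrm{tr}[(\partial_{\theta_{k}}\hat{\sigma}_{m})^{2}]$. Second, since $\mathcal{E}'_{m}$ compiles unitaries (which act as isometries on the Bloch vector) with noise channels whose unital PTMs satisfy $\|A_{lm}\|\le \gamma^{-1}$, its own unital PTM obeys $\|A'_{m}\|\le\gamma^{-L}$; together with the orthonormality of the normalized Pauli basis and the fact that Pauli operators are traceless, this converts the trace into a Bloch sum giving $\mathrm{tr}[(\partial_{\theta_{k}}\hat{\sigma}_{m})^{2}]\le\gamma^{-2L}/2$, and polarization (or Cauchy-Schwarz) handles the off-diagonal entries to promote the diagonal estimate to the desired matrix bound.

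To upgrade to the unital refinement, I would iterate the decomposition $\mathcal{E}_{lm}=\beta\Delta+(1-\beta)\tilde{\mathcal{E}}_{lm}$ with $\Delta(\cdot)=\mathrm{tr}[\cdot]\,\hat{I}/2^{n}$ and $\tilde{\mathcal{E}}_{lm}$ still unital; because $\Delta$ is both absorbed and reproduced by any unital channel placed on either side, the $L$ noisy layers compose to $(1-(1-\beta)^{L})\Delta+(1-\beta)^{L}\tilde{\mathcal{L}}$, lifting the minimum eigenvalue of the effective output to $\ge(1-(1-\beta)^{L})/2^{n}$, and substituting $\beta\mapsto 1-(1-\beta)^{L}$ into the previous argument produces the stated refinement. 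The main technical obstacle I anticipate is treating the classical register cleanly, because $\hat{\rho}_{\mathrm{c},m}$ may be rank deficient and hence the joint state need not satisfy the minimum-eigenvalue hypothesis directly. My remedy is to exploit that $\mathcal{C}_{lm}$ is block-diagonal in the classical basis, so the state persists as a classical mixture $\hat{\sigma}_{m}(\vb*{\theta})=\sum_{i}p_{mi}\hat{\sigma}_{mi}(\vb*{\theta})\otimes|i\rangle\langle i|$ throughout the circuit; the block-additive form $J_{m}=\sum_{i}p_{mi}J[\hat{\sigma}_{mi}]$ then reduces every estimate to per-block inequalities on the $n$-qubit system factor, where Conditions~(I)--(II) and the PTM contraction apply to $\hat{\sigma}_{mi}$ directly and the weights $p_{mi}$ drop out via $\sum_{i}p_{mi}=1$.
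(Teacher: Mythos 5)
Your proposal is correct, and its scaffolding---compiling each copy into an effective channel $\mathcal{E}'_m$, handling the classical register through the block decomposition $J_m=\sum_i p_{mi}J[\hat{\sigma}_{mi}]$ (the paper's convexity-plus-additivity step, which is in fact an equality for these block-diagonal states), additivity over the $N$ copies, and the quantum Cram\'{e}r--Rao inequality with gradient $2^{(n-1)/2}\vb*{x}$---matches the paper's Lemmas S1--S4. The genuine difference is the single-copy estimate. The paper never touches the SLD spectral form: it invokes the exact identity of Watanabe et al., $\vb*{e}^{T}J^{-1}\vb*{e}=\mathrm{Var}_{\mathcal{E}(\hat{\rho})}\qty[(A^{-1})^{T}\vb*{e}\cdot\hat{\vb*{\lambda}}]$, and lower-bounds this variance by keeping only the $\beta$-weighted maximally mixed component of $\mathcal{E}(\hat{\rho})=\frac{\beta}{2^n}\hat{I}+(1-\beta)\hat{\rho}'$, yielding $\vb*{e}^{T}J^{-1}\vb*{e}\geq 2^{1-n}\beta\norm{(A^{-1})^{T}\vb*{e}}^{2}\geq 2^{1-n}\beta\gamma^{2L}$. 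You instead bound $J$ from above directly via the spectral representation, using the minimum output eigenvalue $\beta/2^{n}$ guaranteed by the final noise layer together with multiplicativity of the unital PTM block under composition, $\norm{A'_{m}}\leq\gamma^{-L}$; this produces the identical constant $2^{n-1}\beta^{-1}\gamma^{-2L}$, is more elementary, and never requires inverting $A$ in the bound itself (condition (I) then enters only to guarantee $J_{\mathrm{tot}}>0$ so that $\vb*{x}^{T}J_{\mathrm{tot}}^{-1}\vb*{x}$ is meaningful, which you should state explicitly; the paper proves $J>0$ from (I)--(II), and otherwise finite-variance unbiased estimation is impossible and the bound holds vacuously). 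What the paper's route buys in exchange is the explicit dual observable $\hat{Y}$ with $\mathcal{E}^{\dagger}(\hat{Y})=\hat{X}$, which it reuses to exhibit the optimal measurement and show asymptotic saturation under global depolarizing noise; your route does not yield this byproduct. One small repair: drop the vague ``polarization'' remark---since $\partial_{\vb*{e}}\hat{\sigma}_{m}=\frac{1}{2}\qty(A'_{m}\vb*{e})\cdot\hat{\vb*{\lambda}}$ for every unit vector $\vb*{e}$, your diagonal argument applies verbatim in any direction and gives $\vb*{e}^{T}J_{m}\vb*{e}\leq\frac{2^{n}}{\beta}\tr\qty[\qty(\partial_{\vb*{e}}\hat{\sigma}_{m})^{2}]=\frac{2^{n-1}}{\beta}\norm{A'_{m}\vb*{e}}^{2}\leq 2^{n-1}\beta^{-1}\gamma^{-2L}$, so the matrix bound follows with no off-diagonal bookkeeping. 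Your unital refinement via iterating $\mathcal{E}_{lm}=\beta\Delta+(1-\beta)\tilde{\mathcal{E}}_{lm}$, with $\Delta$ absorbed by unital maps on the left and trace-preserving maps on the right, is precisely the inductive step in the paper's Lemma S2, correctly applied per classical branch of the compiled channel.
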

Theorem \ref{thm_1} shows that, if $\gamma > 1$, the cost $N$ of the unbiased QEM grows exponentially with the circuit depth $L$
no matter how we choose $\mathcal{E}_{lm}$ (with their strength bounded from below), $\mathcal{C}_{lm}$, and $\mathcal{M}$.
We can indeed show $\gamma > 1$ for unital noise under the condition (II) (See SM~\cite{Note1} for details).

Let us make a few remarks on the condition of Theorem~\ref{thm_1}. Condition (I) is a necessary condition for successful QEM, meaning that the information of the quantum state is not completely destroyed by noise.
Condition (II)  means that the variance of all observable of any state after the noise is applied is non-zero.
In other words, for any observable and quantum state, the cost of obtaining an unbiased estimator from the measurement of the noisy state is greater than zero.
We also remark that $\beta$ is a constant that represents how far away the generalized Bloch sphere is from the original surface due to the noise.

It is noteworthy that the lower bound stated in Theorem 1 is for a generic layered quantum circuit. Since it also involves circuits that only weakly entangle qubits, the lower bound~\eqref{eq:thm_1} does not depend on the qubit count $n$. However, if the quantum circuit scrambles the quantum state strong enough, we expect that every noise affects the measurement outcome; we must pay overhead to eliminate every local noise and thus encounter dependence on $n$. In fact, under local noise we can tighten the bound as in the following informal theorem (See SM for details~\cite{Note1}):
\begin{thm}
    \label{thm_2}
    Let $U_1, U_2, ..., U_{L-1}, U_L$ be $n$-qubit unitary gate drawn from a set of random unitary that form unitary 2-design~\cite{dankert2009exact} and $\mathcal{E}_l$ be a local noise. 
    Then, 
    there is exponential growth with both qubit count $n$ and depth $L$ in the average over the number of copies $N$ 
    required to perform unbiased estimation of $\ev*{\hat{X}}$ over $\{U_1, ..., U_L\}$.
\end{thm}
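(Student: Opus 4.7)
My plan is to sharpen Theorem~\ref{thm_1}'s bound by exploiting that $2$-design gates twirl each layer of noise into a global depolarizing channel whose contraction factor shrinks exponentially with $n$. The key input is the standard twirling identity that, for any noise $\mathcal{E}$ with unital Pauli transfer matrix $A$, averaging $\mathcal{V}\circ\mathcal{E}\circ\mathcal{V}^{-1}$ over a $2$-design $\mathcal{V}$ produces a global depolarizing channel with shrinkage $\bar\lambda=\mathrm{tr}(A)/(4^{n}-1)$. For local noise, which factorizes across qubits, $\bar\lambda$ is exponentially small in $n$: for instance, single-qubit depolarizing with rate $p$ on every qubit gives $\bar\lambda=[(4-3p)^{n}-1]/(4^{n}-1)\approx(1-3p/4)^{n}$.

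I would begin by rewriting the noisy circuit as a single composite noise channel acting after the ideal unitary,
\begin{equation*}
\mathcal{E}_L\circ\mathcal{U}_L\circ\cdots\circ\mathcal{E}_1\circ\mathcal{U}_1 \;=\; \tilde{\mathcal{E}}\circ(\mathcal{U}_L\circ\cdots\circ\mathcal{U}_1),
\end{equation*}
where $\tilde{\mathcal{E}}=\tilde{\mathcal{E}}_L\circ\cdots\circ\tilde{\mathcal{E}}_1$ with $\tilde{\mathcal{E}}_l=\mathcal{V}_l\circ\mathcal{E}_l\circ\mathcal{V}_l^{-1}$ and $\mathcal{V}_l=\mathcal{U}_L\circ\cdots\circ\mathcal{U}_{l+1}$. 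Since the $\mathcal{U}_l$ are independent $2$-designs, each $\mathcal{V}_l$ is distributed as a $2$-design (though not independently across $l$). Feeding $\tilde{\mathcal{E}}$ (after absorbing each copy's $\mathcal{C}_{lm}$'s into the per-copy effective noise) into the Fisher-information chain of inequalities already established in Theorem~\ref{thm_1} yields
\begin{equation*}
N \;\geq\; \frac{\norm{\vb*{x}}^{2}}{\varepsilon^{2}}\,\beta\,\norm{\tilde A}^{-2},
\end{equation*}
where $\tilde A$ is the unital PTM of $\tilde{\mathcal{E}}$. The random conjugations now admit a much tighter estimate than the crude submultiplicative bound $\norm{\tilde A}\le\prod_l\norm{A_l}$ used in Theorem~\ref{thm_1}.

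The next step is to upper bound $\mathbb{E}[\norm{\tilde A}^{2}]$ by integrating the $\mathcal{U}_l$'s out one layer at a time. Conditioning on the other unitaries, each $\mathcal{U}_l$ acts as a fresh $2$-design that twirls the inner block it conjugates; using the bi-invariance of the Haar/$2$-design measure to absorb the surrounding fixed rotations, the twirling identity replaces that block by a global depolarizing channel with shrinkage $\bar\lambda_{l'}$, and all subsequent twirls remain invariant under further conjugations. Iterating from $l=L$ down to $l=1$ should yield $\mathbb{E}[\norm{\tilde A}^{2}]\lesssim\bar\lambda^{2L}$ with $\bar\lambda=\max_l\bar\lambda_l$. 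Combining with the Jensen-type inequality $\mathbb{E}[1/X]\ge 1/\mathbb{E}[X]$ for the positive random variable $X=\norm{\tilde A}^{2}$ gives
\begin{equation*}
\mathbb{E}[N] \;\gtrsim\; \frac{\norm{\vb*{x}}^{2}}{\varepsilon^{2}}\,\bar\lambda^{-2L},
\end{equation*}
which for local noise is exponential in both $n$ and $L$, as claimed.

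The hard part will be this iterative integration: because adjacent $\mathcal{V}_l$'s share unitary factors, the $\tilde{\mathcal{E}}_l$'s are correlated and the expectation cannot be factorized as a product of independent twirls. A rigorous treatment requires recursively using the relation $\mathcal{V}_{l-1}=\mathcal{V}_l\circ\mathcal{U}_l$ together with $2$-design (Weingarten-type) identities, and may most cleanly be carried out by first upper-bounding the operator norm by the Frobenius norm $\norm{\tilde A}\le\norm{\tilde A}_F$, whose square $\mathrm{tr}(\tilde A^T\tilde A)$ lives in a tensor-product space where $2$-design averages act straightforwardly. A secondary subtlety is promoting the expectation-level bound to a bound on $\mathbb{E}[\norm{\tilde A}^{-2}]$: the Jensen/Cauchy-Schwarz step suffices for the averaged statement above, but tightening it to a high-probability statement would require concentration arguments beyond the $2$-design assumption.
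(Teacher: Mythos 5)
Your route is genuinely different from the paper's, and for unital local noise it can essentially be carried out --- but it is quantitatively weaker, and one step you treat as routine is a real gap. The paper never passes through Theorem~1's $\beta$--$\Gamma$ chain: it lower-bounds the cost by the exact variance (Eq.~(10) of Watanabe et al.) of the \emph{back-propagated} observable $(\mathcal{N}_{L-1}^{\dagger})^{-1}\circ\mathcal{U}_L^{\dagger}(\hat{X})$ in the noisy state, introduces the Choi-purity-type quantity $\nu(\mathcal{E}^{-1})=\mathrm{tr}[(\hat{I}\otimes\mathcal{E}^{-1}(\ketbra{\Gamma}))^2]/d^2$ of the \emph{inverse} channel --- multiplicative over tensor factors, which is where the $n$-exponential comes from --- and closes a one-step recursion $\mathbb{E}_{U_l}[\nu(\mathcal{N}_l^{-1})]=\frac{d^2\nu(\mathcal{E}_l^{-1})-1}{d^2-1}\bigl(\nu(\mathcal{N}_{l-1}^{-1})-\frac{1}{d^2}\bigr)+\frac{1}{d^2}$ using only second moments; for unital noise the first-moment twirl over $U_1$ replaces the input state by $\hat{I}/d$, so no $\beta$ ever appears, and the non-unital case is handled by a second quantity $\eta$ and a $2\times 2$ recursion. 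On your side, the part you call "the hard part" is actually easier than you fear: in PTM form $\tilde{A}=A_LV_LA_{L-1}V_{L-1}\cdots A_1V_1(V_L\cdots V_1)^{-1}$, and since orthogonal factors do not change singular values, $\|\tilde{A}\|_F=\|A_LV_LA_{L-1}\cdots V_2A_1\|_F$ with \emph{independent} $V_2,\dots,V_L$ --- the correlations telescope away. Each Frobenius average is then a clean adjoint twirl $\mathbb{E}_V[V^TCV]=\frac{\mathrm{tr}(C)}{d^2-1}I$ (degree $(2,2)$ in $U$, so a 2-design suffices), giving $\mathbb{E}[\|\tilde{A}\|_F^2]=(d^2-1)\prod_l\frac{\|A_l\|_F^2}{d^2-1}$. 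Note the per-layer factor is the mean \emph{squared} singular value $\|A_l\|_F^2/(d^2-1)$, not your first-moment $\bar{\lambda}_l^2$ (Cauchy--Schwarz makes it strictly larger), though for local depolarizing both equal $1-\frac{3}{2}p+O(p^2)$ per qubit, so your Jensen chain does recover the paper's leading rate $(1+\frac{3}{2}p)^{nL}$, at the price of a $\sim 4^{-n}$ prefactor from the operator-to-Frobenius relaxation.

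The genuine weaknesses are these. (i) Your per-realization bound inherits $\beta$ from Theorem~1, and for genuinely local noise $\beta$ is itself exponentially small in $n$: for local depolarizing $\beta=p^n$, and for amplitude damping $\beta=0$ since condition~(II) fails ($\mathcal{E}(\ketbra{0})$ is pure). Combined with the $4^{-n}$ slack, your bound exhibits growth in $n$ only once $L$ exceeds a threshold of order $\log(1/p)/p$, and it is vacuous for non-full-rank non-unital noise, whereas the paper's unital proof carries no such prefactors (and its non-unital bound, which does use a $\beta$, is the one the authors note is quadratically tighter than Quek et al.). If you want to repair this within your framework, the fix is the paper's trick: use unitality and the first-moment twirl over $U_1$ to replace the input state by $\hat{I}/d$ before invoking the variance bound, rather than invoking a state-independent $\beta$. (ii) "Absorbing each copy's $\mathcal{C}_{lm}$'s into the per-copy effective noise" is not justified: the $\mathcal{C}_{lm}$'s may be chosen adaptively, as functions of the realized $U_l$'s, which destroys the independence underlying your twirl computation (a fixed, circuit-independent interleaved unitary would be harmless, since a 2-design composed with a fixed unitary is still a 2-design, but adaptivity is exactly what QEM compilation uses). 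The paper's own proof has the same restriction --- it explicitly treats only unmodified copies of the noisy circuit and states the extension to general virtual circuits as a belief --- so you should state this limitation rather than claim the absorption.
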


\emph{Applications.---}
Here, we compare the obtained bounds and the practical performance of QEM methods under realistic noise channels to determine the efficiency of existing methods.
For the sake of illustrativeness,
we consider three typical noise channels: the global and local depolarizing noise as representative of unital noise, and amplitude damping noise as representative of nonunital noise.


First, we consider the case where all unitary gates are followed by the global depolarizing noise $\mathcal{E}_{lm} : \vb*{\theta} \mapsto (1-p_{lm})\vb*{\theta}$, where the error rate is lower bounded as $p_{lm} \geq p$.
Since the global depolarizing noise channel is unital and satisfies the assumptions of Theorem \ref{thm_1} with minimal noise strength $\gamma = \frac{1}{1-p}$ and $\beta = p$, the cost $N$ required for the unbiased estimator of the expectation value $\ev*{\hat{X}}$ constructed from QEM shall satisfy
\begin{eqnarray}
    N &\geq& \frac{\norm{\vb*{x}}^2}{\varepsilon^2}\qty(1-(1-p)^L)\qty(\frac{1}{1-p})^{2L}\\
    \label{eq_cost_gdn}
    &\sim& \frac{\norm{\vb*{x}}^2}{\varepsilon^2}\qty(\frac{1}{1-p})^{2L}.
\end{eqnarray}

We can show that Eq.~\eqref{eq_cost_gdn} can be saturated in the limit of large $L$.
By setting $p_{lm} = p$ and ignoring the classical registers and the additional operations, the effective noise channel $\mathcal{E}_j'$ can be seen as the global depolarizing noise channel with error rate $1-(1-p)^L$.
Since the measurement on the observable $\hat{X}$ yields $\ev*{\hat{X}} ^{\mathrm{noisy}} = 2^{(n-1)/2}(1-p)^L\vb*{\theta}\cdot\vb*{x}$, we achieve unbiased estimation by rescaling the measurement result as $(1-p)^{-L}\ev*{\hat{X}} ^{\mathrm{noisy}}$.
Since the estimation variance on $\ev*{\hat{X}}^{\rm noisy}$ is $\norm{\vb*{x}}^2$ in the limit of large $L$, the sampling cost to estimate $\ev*{\hat{X}}$ approaches 
$\frac{\norm{\vb*{x}}^2}{\varepsilon^2}\qty(\frac{1}{1-p})^{2L}$, which satisfies the lower bound of Theorem \ref{thm_1}.
We compare these results in Fig. \ref{fig_numerics} (a) with other error mitigation methods that also allow unbiased estimation.

Next, we consider the case of local noise $\mathcal{E}_{lm} = (\mathcal{E}^{(0)}_{lm})^{\otimes n}$ 
with $\mathcal{E}^{(0)}_{lm}:\vb*{\theta}\mapsto (1-p_{lm})\vb*{\theta}$ for local depolarizing and $(\theta_x,\theta_y,\theta_z)\mapsto(\sqrt{1-p_{lm}}\theta_x,\sqrt{1-p_{lm}}\theta_y,(1-p_{lm})\theta_z+p_{lm})$ for amplitude damping noise, where $\vb*{\theta} = (\theta_x,\theta_y,\theta_z)$ denotes the Bloch vector and the error rate is lower bounded as $p_{lm} \geq p$.
From Theorem \ref{thm_1}, we can show that the cost $N$ required by any unbiased estimator of the expectation value $\ev*{\hat{X}}$ constructed from QEM satisfies Eq. (\ref{eq_cost_gdn}) in the case of local depolarizing noise.
For a random circuit whose unitary gate is drawn from unitary 2-design such as $n$-qubit Clifford group~\cite{dankert2009exact}, we can even tighten this bound in the average case as
\begin{eqnarray}
    \label{eq_average_localdep}
    \mathbb{E}[N] \geq 
    \begin{cases}
    O\qty(\qty(1 + \frac{3}{2}\frac{4^n}{4^n-1} p)^{nL}) & \text{(local dep.)}\\
    O\qty(\qty(1+\frac{4^n}{4^n-1}p)^{nL}) & \text{(amp. damping)}
    \end{cases}
\end{eqnarray}
from Theorem \ref{thm_2}.
We compare these results in Fig. \ref{fig_numerics} (b)(c) with some QEM methods.

While the scaling of Eq.~\eqref{eq_average_localdep} is derived under the assumption of unitary 2-design, our numerical simulation suggests that the bound shall hold for even wider class of quantum circuits.
Concretely, as is presented in Fig.~\ref{fig_noise_convergence}, the effect of each noise becomes indiscriminable from that of the global depolarizing noise whose error rate grows exponentially with $n$ in the large-$L$ regime, even when any of $\{\mathcal{U}_l\}$ does not constitute unitary 2-design.
These results are in agreement with the phenomenological argument provided in Ref.~\cite{qin2021error} that, noise in deep layered circuits shall be modeled by global depolarizing noise with its strength fluctuating as $O(1/\sqrt{L})$. 
These facts not only give us another evidence for scaling as in Eq. (\ref{eq_average_localdep}) but also imply that, although we cannot remove bias completely, we may optimally suppress the effect of noise by just rescaling the measurement results as in the case of global depolarizing noise. 
We also applied our results for local dephasing noise, and showed that such a picture also holds as well~(See SM for details~\cite{Note1}).

\begin{figure}[t]
    \begin{center}
        \resizebox{0.7\hsize}{!}{\includegraphics{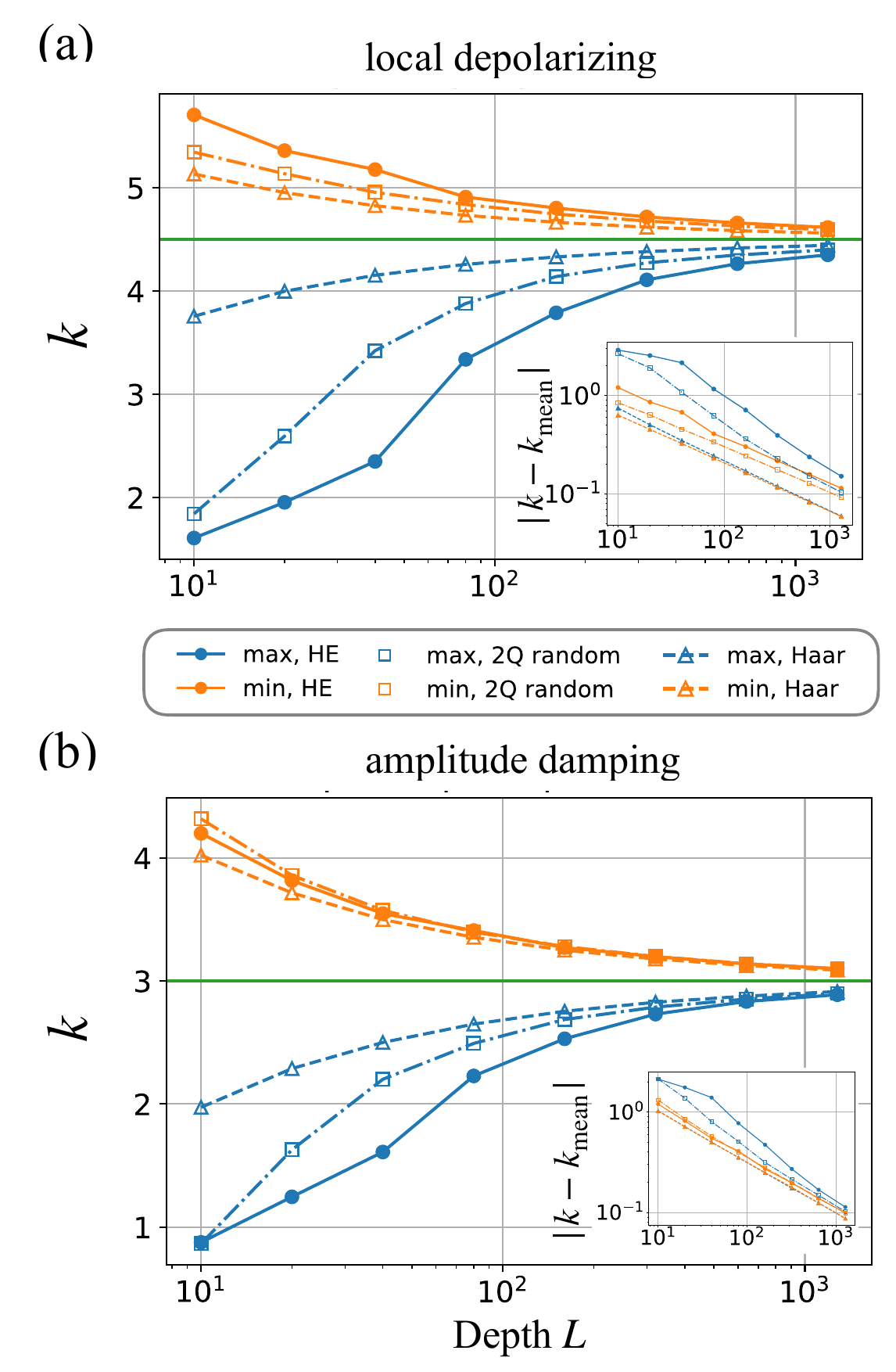}}
        \caption{Convergence of (a) local depolarizing and (b) amplitude damping into global depolarizing noise under random circuits of $n=6$ qubits with error rate $p=0.0001$.
        We denote by $(1-p)^{kL}$ the singular values of the unital part of the Pauli transfer matrix for the effective noise channel $\mathcal{E}_m'$ at each depth $L$, where $k$ for the maximal and minimal ones are plotted in this figure. As is highlighted in the inset, we find that all $k$'s approach the geometric mean $k_{\rm mean}$ of the singular values \red{for each noise channel} with its fluctuation scaling as $O(1/\sqrt{L})$, implying the convergence to the global depolarizing noise.
        For instance, $k_{\rm mean} = 3n4^{n-1}/(4^n-1)$ for local depolarizing and $k_{\rm mean} = 2n4^{n-1}/(4^n-1)$ for amplitude damping.
        Here, we consider three class of random circuits: hardware-efficient ansatz with random parameters, 2-qubit random unitary between random pairs, and Haar random unitary (See SM for details~\cite{Note1}).
        }
        \label{fig_noise_convergence}
    \end{center}
\end{figure}

\emph{Conclusion.---}
In this Letter,  we have presented a theoretical analysis of quantum error mitigation (QEM)
to reveal two unavoidable cost bound for unbiased QEM: exponential growth with depth $L$ for generic layered quantum circuits, and furthermore exponential growth with qubit count $n$ for random/chaotic quantum circuits.
The lower bound is shown to be saturated under global depolarizing noise by just rescaling the measurement result, while numerical results suggest that other noise may also be mitigated as well when the circuit is sufficiently deep, since the noise including both unital ones and nonunital ones may converge to the global depolarizing noise.


We envision a rich variety of future directions. Here we mention the most important two in order.
\red{
The first is to develop even more knowledge of cost-optimal QEM, especially in the early fault-tolerant regime.
Even for the fault-tolerant quantum computer, a slight amount of logical errors may remain in the circuit (especially in the early regime).
The implemented quantum circuits will be much deeper than those of NISQ, and thus the convergence of logical errors to global depolarizing noise is expected to be stronger.
Thus, we believe that we can use our results to develop ways to utilize long-term quantum computation in the most efficient way.
}

The second is to incorporate the influence of bias in the estimators.
QEM methods in reality are not designed to completely remove the effect of the noise,
and a slight bias is allowed to remain in the estimation results.
In such situations, we can expect a trade-off relationship between the cost, bias, and uncertainty of the estimator.
Extending the results on single parameter estimation~\cite{liu2016valid} is left as an interesting future work.

\ \\
\emph{Acknowledgements.---}
The authors wish to thank Ryuji Takagi, Hiroyasu Tajima, and Mile Gu for insightful discussions and for sharing a preliminary version of their manuscript. The authors are also grateful to fruitful discussions with Sergey Bravyi, Suguru Endo, Keisuke Fujii, Liang Jiang, Yosuke Mitsuhashi, Changhun Oh, \red{Zlatko Minev}, Kunal Sharma, Yasunari Suzuki, and  Kristan Temme.
This work was supported by JST ERATO-FS Grant Number JPMJER2204, JST Grant Number JPMJPF2221, \red{JST CREST Grant Number JPMJCR23I4, and JST ERATO Grant Number  JPMJER2302, Japan}.
K.T. is supported by Worldleading Innovative Graduate Study Program for Materials Research, Industry, and Technology (MERITWINGS) of the University of Tokyo.
T.S. is supported by JSPS KAKENHI Grant Number JP19H05796, Japan.
N.Y. wishes to
thank JST PRESTO No. JPMJPR2119.
T. S. and N.Y. acknowledge the support from IBM Quantum.

\ \\
\emph{Note added.---}
During the completion of our manuscript, we
became aware of an independent work by Takagi et al.~\cite{takagi2022universal}, which
also showed the exponential growth of the cost $N$ with circuit depth based on analysis of discriminability between quantum states.
Also, Quek et al.~\cite{quek2022exponentially} has theoretically analyzed the exponential scaling of sample complexity regarding both qubit counts and circuit depth via statistical learning theory.
\red{We note that, for non-unital noise, our average bound is quadratically tighter than the bound obtained by Refs.~\cite{quek2022exponentially}.}
\red{See SM~\cite{Note1} for more details.}

\let\oldaddcontentsline\addcontentsline
\renewcommand{\addcontentsline}[3]{}
\bibliography{bib.bib}
\let\addcontentsline\oldaddcontentsline
\onecolumngrid

\clearpage
\begin{center}
	\Large
	\textbf{Supplementary Materials for: Universal cost bound of quantum error mitigation based on quantum estimation theory}
\end{center}

\setcounter{section}{0}
\setcounter{equation}{0}
\setcounter{figure}{0}
\setcounter{table}{0}
\setcounter{thm}{0}
\renewcommand{\thesection}{S\arabic{section}}
\renewcommand{\theequation}{S\arabic{equation}}
\renewcommand{\thefigure}{S\arabic{figure}}
\renewcommand{\thetable}{S\arabic{table}}
\renewcommand{\thethm}{S\arabic{thm}}
\renewcommand{\thelem}{S\arabic{lem}}

\addtocontents{toc}{\protect\setcounter{tocdepth}{0}}
{
\hypersetup{linkcolor=blue}
\tableofcontents
}

\section{A brief review of quantum estimation theory}
In this section, we present a brief review of quantum estimation theory~\cite{helstrom1969quantum, holevo2011probabilistic, hayashi2006quantum}.
Quantum estimation theory characterizes the amount of information that can be extracted from quantum measurement. 
Specifically, an information-theoretic quantity called the quantum Fisher information~\cite{helstrom1967minimum} is known to describe the uncertainty (or variance) of unbiased estimation on a physical property of an unknown quantum state.
In the following, we aim to introduce the mathematical tools of the quantum estimation theory by reviewing their main claims under two setups. The first is the case when we have access to the noiseless quantum state, and the second is when the unknown quantum state is exposed to a single known error channel.

Let us assume that the unknown noiseless quantum state is parameterized as $\hat{\rho}(\vb*{\theta})$ where $\vb*{\theta}$ are unknown parameters, and also that the quantity we wish to estimate is expressed as a function of $\vb*{\theta}$ as $f(\vb*{\theta})$.
The question is, given independent many copies of parameterized quantum states $\hat{\rho}(\vb*{\theta})$, how accurately the value of $f(\vb*{\theta})$ can be estimated by performing POVM measurements on these states.
We emphasize that such a formulation encompasses versatile problems. For instance, the goal of quantum metrology is to estimate $f(\vb*{\theta}) = \vb*{\theta}$ where $\vb*{\theta}$ denotes the amplitude of target field imposed on the system. 
\black{As we shortly explain, it is also compatible with one of the essential tasks in the context of quantum computing, namely to identify $f(\vb*{\theta})$ with the expectation value of a physical observable that one wishes to estimate.}

According to the quantum estimation theory, the quantum Fisher information matrix characterizes the uncertainty of the estimator~(See Fig.~\ref{fig_quantum_estimation_thoery}).
To be concrete, when the estimator $\qty(f(\vb*{\theta}))^{\mathrm{est}}$ of an unknown $f(\vb*{\theta})$ is unbiased, i.e., when expectation value of $\qty(f(\vb*{\theta}))^{\mathrm{est}}$ equals $f(\vb*{\theta})$ \red{for all $\vb*{\theta}$}, the variance of the estimator is related with the number of copies $N$ via the quantum Cram\'{e}r-Rao inequality~\cite{braunstein1994statistical} as
\begin{eqnarray}
    \label{eq_QCRB}
    \mathrm{Var}(\qty(f(\vb*{\theta}))^{\mathrm{est}}) \geq \frac{1}{N} \grad_{\vb*{\theta}}(f(\vb*{\theta}))^T J(\hat{\rho}(\vb*{\theta}))^{-1} \grad_{\vb*{\theta}}(f(\vb*{\theta})),
\end{eqnarray}
where $\grad_{\vb*{\theta}}(f(\vb*{\theta}))^T J(\hat{\rho}(\vb*{\theta}))^{-1} \grad_{\vb*{\theta}}(f(\vb*{\theta}))$ in the right-hand side denotes the inverse of 
the quantum Fisher information of $f(\vb*{\theta})$.
Here, the quantum Fisher information matrix $J$ of $\hat{\rho}(\vb*{\theta})$ is defined as follows,
\begin{eqnarray}
    \qty[J]_{ij} = \frac{1}{2}\tr\qty[\hat{\rho}\qty{\hat{L}_i, \hat{L}_j}],
\end{eqnarray}
where $\qty{\cdot,\cdot}$ represents the anti-commutation and $\hat{L}_i$ is the symmetric logarithmic
derivative (SLD) operator defined as
\begin{eqnarray}
    \pdv{\hat{\rho}}{\theta_i} = \frac{1}{2}\qty{\hat{\rho}, \hat{L}_i}.
\end{eqnarray}

\begin{figure}[ht]
    \begin{center}
        \includegraphics[width=120mm]{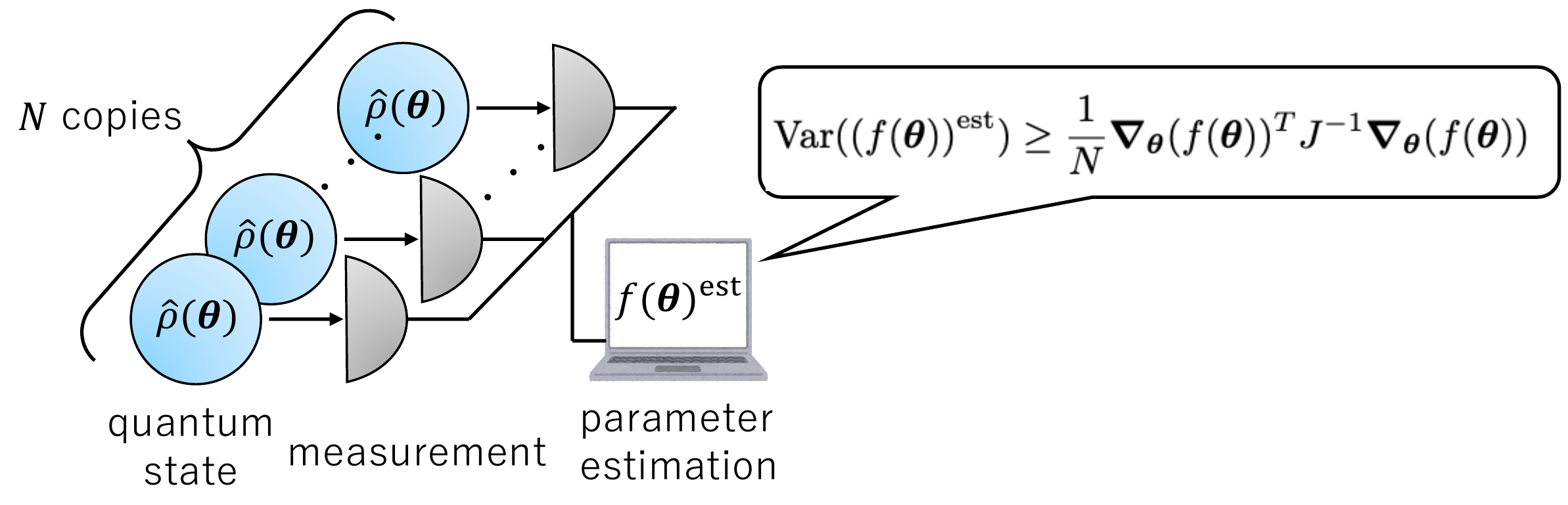}
        \caption{Overview of quantum estimation theory. $N$ independent copies of parameterized quantum states $\hat{\rho}(\vb*{\theta})$ are measured to obtain an unbiased estimator $\qty(f(\vb*{\theta}))^{\mathrm{est}}$ of an unknown $f(\vb*{\theta})$. The variance of the estimator satisfies the Cram\'{e}r-Rao inequality. In this figure, the POVM measurements are performed independently on each quantum state, but the quantum Cram\'{e}r-Rao inequality also holds even when POVM measurement is an entangled measurement over the entire copies of $\hat{\rho}(\vb*{\theta})$.}
        \label{fig_quantum_estimation_thoery}
    \end{center}
\end{figure}

Let us next see how the explicit form of the right-hand side of the inequality~\eqref{eq_QCRB} is given under the task of estimating physical observables.
In this work, we mainly consider the quantum states of $n$-qubit systems which are parameterized by the generalized Bloch vector \cite{kimura2003bloch}.
A quantum state $\hat{\rho}$ of an $n$-qubit system and traceless observable $\hat{X}$ can be expressed as
\begin{eqnarray}
    \hat{\rho}(\vb*{\theta}) &=& \frac{1}{2^n} \hat{I}  + 2^{(-1-n)/2}\vb*{\theta} \cdot \hat{\vb*{P}}\left( = \frac{1}{2^n} \hat{I}  + \frac{1}{2}\vb*{\theta} \cdot \hat{\vb*{\lambda}}\right),\\
    \hat{X} &=& \vb*{x}\cdot\hat{\vb*{P}}, 
\end{eqnarray}
where $\vb*{\theta} \in \mathbb{R}^{2^{2n}-1}$ is the generalized Bloch vector, $\vb*{x} \in \mathbb{R}^{4^n-1}$, $\hat{\vb*{P}} = \{\hat{P}_i\}_{i=1}^{2^{2n}-1}$ is an array of nontrivial Pauli string, i.e., $\hat{P}_i \in \qty{\hat{\sigma}_0, \hat{\sigma}_x, \hat{\sigma}_y, \hat{\sigma}_z}^{\otimes n} \setminus \qty{\hat{\sigma}_0}^{\otimes n}$ is a tensor product of Pauli operators $\hat{\sigma}_x, \hat{\sigma}_y, \hat{\sigma}_z$ and the 1-qubit unit operator $\hat{\sigma}_0$, and $\hat{I} = \hat{\sigma}_0^{\otimes n}$.
Note that these are related to generators of the Lie algebra $SU(2^n)$ \black{in Refs.~\cite{kimura2003bloch}} as $\hat{\lambda}_i = 2^{(1-n)/2} \hat{P}_i$.
Then, the expectation value $\ev*{\hat{X}}$ of the observable $\hat{X}$ in the quantum state $\hat{\rho}$ is represented as
\begin{eqnarray}
    \ev*{\hat{X}} = \tr\qty[\hat{\rho} \hat{X}] = 2^{(n-1)/2}\vb*{\theta}\cdot\vb*{x}.\label{eq_trX}
\end{eqnarray}
Now it is clear that the estimation uncertainty of $\ev*{\hat{X}}$ can be discussed using the quantum estimation theory. By plugging the expression of $\ev*{\hat{X}}$ into the quantum Cram\'{e}r-Rao inequality~\eqref{eq_QCRB},
we find that the bound of the $\mathrm{Var}(\ev*{\hat{X}}^{\mathrm{est}})$  satisfies
\begin{eqnarray}
    \mathrm{Var}(\ev*{\hat{X}}^{\mathrm{est}}) &\geq& \frac{2^{n-1}}{N} \vb*{x}^T J(\hat{\rho}(\vb*{\theta}))^{-1} \vb*{x}\\
    &=&
    \frac{1}{N}(\Delta\hat{X})^2
\end{eqnarray}
where $J(\hat{\rho}(\vb*{\theta}))$ is the quantum Fisher information matrix of the noiseless $\hat{\rho}$
and $(\Delta\hat{X})^2$ is the variance of $\hat{X}$ in the noiseless state $\hat{\rho}$.
This equality can be achieved when we realize the optimal POVM measurement, which is
the projection measurement in the diagonal basis of the operator $\hat{X}$.

One can further analyze the variance of the unbiased estimator when the available quantum state is exposed under a noise channel $\mathcal{E}$ as~\cite{watanabe2010optimal}
\begin{eqnarray}
    \mathcal{E}(\hat{\rho}) = \frac{1}{2^n}\hat{I} + \frac{1}{2}(A\vb*{\theta} + \vb*{c})\cdot\hat{\vb*{\lambda}},
\end{eqnarray}
where $A$ is a $(2^{2n}-1)$-dimensional real matrix with its $ij$-element represented as $A_{ij} = \frac{1}{2}\mathrm{tr}[\hat{\lambda}_i\mathcal{E}(\hat{\lambda}_j)]$ and $\vb*{c} \in \mathbb{R}^{2^{2n}-1 }$ is a real vector whose $i$-th component satisfies $c_i = \frac{1}{2^n}\mathrm{tr}[\hat{\lambda}_i\mathcal{E}(\hat{I})]$.
Note that $A$ is an unital part of the Pauli transfer matrix of $\mathcal{E}$ where $\vb*{c}$ characterizes its non-unital action.
The main findings of Ref.~\cite{watanabe2010optimal} are that, when we can only prepare $N$ independent copies of a noisy state $\mathcal{E}(\hat{\rho})$, the variance $\mathrm{Var}(\ev*{\hat{X}}^{\mathrm{est}})$ of an unbiased estimator of the noiseless state obey the following inequality:
\begin{eqnarray}
    \mathrm{Var}(\ev*{\hat{X}}^{\mathrm{est}}) 
    &\geq& \frac{2^{n-1}}{N} \vb*{x}^T J(\mathcal{E}(\hat{\rho}(\vb*{\theta})))^{-1} \vb*{x} \\
    &=& \frac{1}{N}\qty(\tr\qty[\mathcal{E}(\hat{\rho})\qty((A^{-1})^T\vb*{x}\cdot \hat{\vb*{P}})^2] - \qty(\tr\qty[\mathcal{E}(\hat{\rho})\qty((A^{-1})^T\vb*{x}\cdot \hat{\vb*{P}})])^2) \\
    &=&\frac{1}{N}
    \black{(\Delta\hat{Y})^2},\label{eq_noisy_qcrb}
\end{eqnarray}
where $J(\mathcal{E}(\hat{\rho}(\vb*{\theta})))$ is now the quantum Fisher information matrix of the \black{{\it noisy}} state $\mathcal{E}(\hat{\rho})$
and $(\Delta\hat{Y})^2$ is the variance of an observable \black{$\hat{Y}$} in the {\it noisy} state $\mathcal{E}(\hat{\rho})$.
The last inequality~\eqref{eq_noisy_qcrb} tells us that the optimal estimation strategy involves an operator that satisfies $\mathcal{E}^\dagger(\hat{Y}) = \hat{X}$, whose explicit expression is given as
\begin{equation}
\hat{Y} = \qty(- \black{2^{(n-1)/2}}(A^{-1})^T\vb*{x}\cdot\vb*{c})\hat{I}+ (A^{-1})^T\vb*{x}\cdot \hat{\vb*{P}}.
\end{equation}
Note that $\mathrm{tr}[\hat{\rho} \hat{X}] = \mathrm{tr}[\mathcal{E}(\hat{\rho}) \hat{Y}]$, so $\hat{Y}$ can be interpreted as an observable that absorbs the effect of the noise.
Therefore, the way to minimize the variance of the unbiased estimator $\ev*{\hat{X}}^{\mathrm{est}}$ for the noiseless state through the measurement of the noisy state is to perform the projection measurement in the diagonal basis of the operator $\hat{Y}$.

\section{Mapping QEM methods into virtual circuit}
In this section, we explain how to map QEM methods into the virtual quantum circuit which we have presented in the main text.
The virtual circuit structure encompasses most existing quantum error mitigation methods: zero-noise extrapolation~\cite{li2017efficient,temme2017error,kandala2019error}, probabilistic error cancellation~\cite{temme2017error, endo2018practical, berg2022probabilistic}, virtual distillation~\cite{huggins_virtual_2021, koczor_exponential_2021, huo2022dualstate, czarnik2021qubit}, (generalized) quantum subspace expansion~\cite{mcclean_2017, mcclean2020decoding, yoshioka2022variational, yoshioka2022generalized}, symmetry verification/expansion~\cite{bonet-monroig2018lowcost, mcardle2019error, cai2021quantumerror}, and learning-based error mitigation~\cite{Czarnik2021errormitigation, strikis2021learning}.
For instance, we can implement zero-noise extrapolation by varying noise levels in the noise channel $\mathcal{E}_{lm}$ among each copy as in Fig.~\ref{fig_QEM_methods}(a).
We can perform probabilistic error cancellation by performing probabilistic operations using  classical registers $\hat{\rho}_{\mathrm{c},m}$ and additional operations $\mathcal{C}_{lm}$, and changing the way of post-processing according to the measurement result of $\hat{\rho}_{\mathrm{c},m}$ (Fig.~\ref{fig_QEM_methods}(b)).
We can also perform virtual distillation by performing measurements on two or more copies (Fig.~\ref{fig_QEM_methods}(c)).
We note that we can even incorporate measurement error and measurement error mitigation~\cite{kandala2017hardware, heinsoo2018rapid, bravyi2020mitigating} into the virtual circuit because noisy measurement followed by the process of measurement error mitigation can be seen as a single POVM measurement.
\begin{figure}[ht]
    \begin{center}
        \includegraphics[width=160mm]{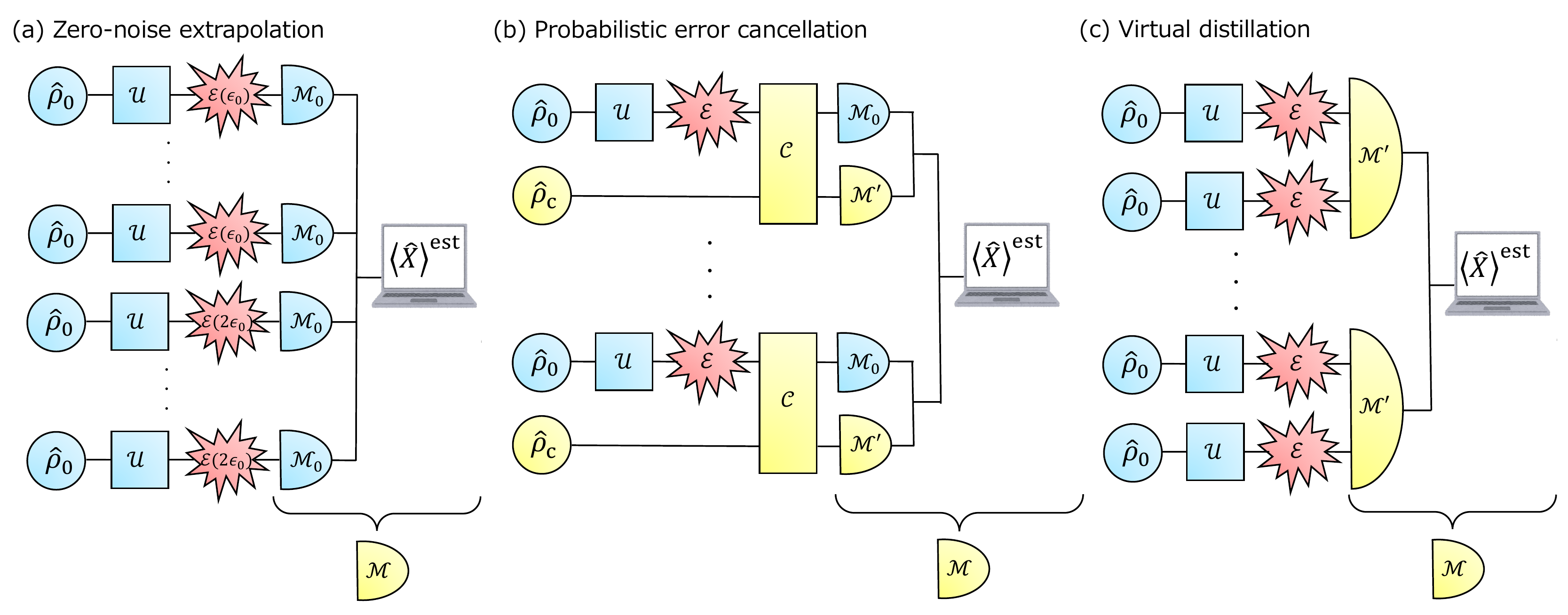}
        \caption{Ways of performing (a) zero-noise extrapolation, (b) probabilistic error cancellation, and (c) virtual distillation using the virtual quantum circuit. Measurements on each copy and post-process of those measurement results can be regarded as a single POVM measurement $\mathcal{M}$ acting on entire copies.}
        \label{fig_QEM_methods}
    \end{center}
\end{figure}

\section{Proof of Theorem \ref{thm_1} in the main text}
In this section, we present the proof of Theorem \ref{thm_1} in the main text.
The proof proceeds by progressively increasing the complexity of the virtual quantum circuit being treated. 
First, we analyze a simple case where the circuit is composed of only a single noisy gate, as expressed in Lemma \ref{lem_1}. 
Next, Lemma \ref{lem_1} is extended to Lemma \ref{lem_2}, which bounds the quantum Fisher information matrix of a noisy layered quantum circuit as represented in Fig. \ref{fig_lem_2}.
Then, as in Lemma \ref{lem_3}, we evaluate the case where we allow stochastic operations between the noisy gate.
Finally, we analyze the quantum Fisher information matrix of the virtual quantum circuit in Lemma \ref{lem_4}.
By combining Lemma \ref{lem_4} with the quantum Cram\'{e}r-Rao inequality, we can immediately prove Theorem \ref{thm_1} in the main text.
We also show $\gamma > 1$ for the unital noise satisfying the condition (II)  stated in Theorem \ref{thm_1}.

First, as in Fig. \ref{fig_lem_1}, we think of a situation where we apply a unitary gate $\mathcal{U}(\cdot) = \hat{U}\cdot \hat{U}^\dagger$ to an $n$-qubit initial state $\hat{\rho}_0$ to obtain $\hat{\rho} = \mathcal{U}(\hat{\rho}_0) = \frac{1}{2^n} \hat{I}  + \frac{1}{2}\vb*{\theta} \cdot \hat{\vb*{\lambda}}$, and measure the state to extract its information.
However, the unitary gate $\mathcal{U}$ is followed by a noise channel $\mathcal{E}$ and the POVM measurement $\mathcal{M}$ can be performed only on $\mathcal{E}\circ\mathcal{U}(\hat{\rho}_0) = \mathcal{E}(\hat{\rho}) = \frac{1}{2^n} \hat{I} + \frac{1}{2}(A\vb*{\theta} + \vb*{c})\cdot\hat{\vb*{\lambda}}$.
In such a situation, the information of $\hat{\rho}$ obtained by the POVM measurement $\mathcal{M}$ is characterized by the quantum Fisher information matrix $J(\mathcal{E}(\hat{\rho}(\vb*{\theta})))$ of the noisy state $\mathcal{E}(\hat{\rho})$, which satisfies the following lemma.

\begin{figure}[ht]
    \begin{center}
        \includegraphics[width=60mm]{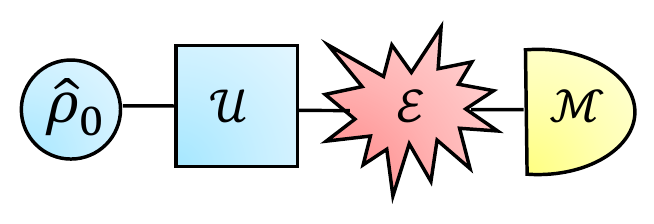}
        \caption{$\hat{\rho} = \mathcal{U}(\hat{\rho}_0)$ is prepared by applying a unitary gate $\mathcal{U}$ to the initial state $\hat{\rho}_0$, but the noise channel $\mathcal{E}$ follows the unitary gate. We can only perform POVM measurement $\mathcal{M}$ on the noisy state $\mathcal{E}(\hat{\rho})$.}
        \label{fig_lem_1}
    \end{center}
\end{figure}

\begin{lem}
    \label{lem_1}
    For the setup shown in Fig. \ref{fig_lem_1}, suppose that the noise $\mathcal{E}$ satisfies the conditions (I) and (II) stated in Theorem~\ref{thm_1} in the main text.
    Then, the quantum Fisher information matrix $J(\mathcal{E}(\hat{\rho}(\vb*{\theta})))$ satisfies
    \begin{eqnarray}
        0 < J(\mathcal{E}(\hat{\rho}(\vb*{\theta}))) \leq 2^{n-1}\beta^{-1}\gamma^{-2}I,
    \end{eqnarray}
    where $\gamma = \Gamma(\mathcal{E}) = \norm{A}^{-1}$ and $\beta$ is the largest $0 < \beta < 1$ such that $\mathcal{E}(\hat{\rho}) - \frac{\beta}{2^n}\hat{I} \geq 0$ for all $\hat{\rho}$.
\end{lem}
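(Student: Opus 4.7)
The plan is to exploit the spectral representation of the SLD quantum Fisher information together with the generalized Bloch parameterization, so that the bound reduces to an estimate on $\|A\vb*{e}\|$.

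First I would compute the derivative of $\hat{\sigma}(\vb*{\theta}) := \mathcal{E}(\hat{\rho}(\vb*{\theta}))$. From the definition of the unital part of the Pauli transfer matrix, $\partial_{\theta_i}\hat{\sigma} = \tfrac{1}{2}\sum_j A_{ji}\hat{\lambda}_j$, so for any $\vb*{e}\in\mathbb{R}^{2^{2n}-1}$ the directional derivative is $\partial_{\vb*{e}}\hat{\sigma} = \tfrac{1}{2}(A\vb*{e})\cdot\hat{\vb*{\lambda}}$. Using $\mathrm{tr}[\hat{\lambda}_j\hat{\lambda}_{j'}] = 2\delta_{jj'}$ gives the clean identity $\mathrm{tr}[(\partial_{\vb*{e}}\hat{\sigma})^2] = \tfrac{1}{2}\|A\vb*{e}\|^2$, which will be the workhorse of both bounds.

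Second, for the upper bound I would invoke the standard spectral formula
\begin{equation*}
\vb*{e}^T J(\hat{\sigma})\vb*{e} \;=\; 2\sum_{k,l}\frac{|\langle k|\partial_{\vb*{e}}\hat{\sigma}|l\rangle|^2}{p_k+p_l},
\end{equation*}
valid for full-rank $\hat{\sigma} = \sum_k p_k |k\rangle\langle k|$. Condition (II), combined with the definition of $\beta$ as the largest constant with $\mathcal{E}(\hat{\rho}) - \tfrac{\beta}{2^n}\hat{I}\ge 0$, guarantees $p_k \ge \beta/2^n$ for every $k$, hence $p_k+p_l \ge 2\beta/2^n$. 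Plugging this denominator bound in and using Parseval's identity $\sum_{kl}|\langle k|\partial_{\vb*{e}}\hat{\sigma}|l\rangle|^2 = \mathrm{tr}[(\partial_{\vb*{e}}\hat{\sigma})^2]$ yields $\vb*{e}^T J\vb*{e}\le \tfrac{2^{n-1}}{\beta}\|A\vb*{e}\|^2 \le \tfrac{2^{n-1}}{\beta\gamma^2}\|\vb*{e}\|^2$, where the last step uses $\|A\| = \gamma^{-1}$. This is exactly the claimed operator-norm bound.

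Third, for the strict positivity $J > 0$, condition (I) (injectivity of $\mathcal{E}$) forces $A$ to be an injective, hence invertible, square matrix on the Bloch space, so $A\vb*{e}\neq 0$ whenever $\vb*{e}\neq 0$ and therefore $\partial_{\vb*{e}}\hat{\sigma}\neq 0$. Since $\hat{\sigma}$ is full rank by condition (II), the SLD $\hat{L}_{\vb*{e}}:=\sum_i e_i \hat{L}_i$ is uniquely defined and nonzero, giving $\vb*{e}^T J\vb*{e} = \mathrm{tr}[\hat{\sigma}\hat{L}_{\vb*{e}}^2] > 0$. No step is genuinely difficult; the only care needed is to check that the spectral formula applies (which is exactly what full-rankness buys us) and that $\beta$ as defined uniformly lower-bounds every eigenvalue of every $\mathcal{E}(\hat{\rho})$. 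Both are immediate from the hypotheses.
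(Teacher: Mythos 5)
Your proof is correct, but it takes a genuinely different route from the paper's. The paper works with the \emph{inverse} Fisher matrix: it invokes the identity from Eq.~(10) of Watanabe et al.\ \cite{watanabe2010optimal}, which expresses $\vb*{e}^T J^{-1}\vb*{e}$ as the variance of the dual observable $(A^{-1})^T\vb*{e}\cdot\hat{\vb*{\lambda}}$ in the noisy state, then decomposes $\mathcal{E}(\hat{\rho}) = \frac{\beta}{2^n}\hat{I} + (1-\beta)\hat{\rho}'$ and drops the nonnegative $\hat{\rho}'$-variance piece to get $\vb*{e}^T J^{-1}\vb*{e} \geq 2^{1-n}\beta\norm{(A^{-1})^T\vb*{e}}^2 \geq 2^{1-n}\beta\gamma^2$, which it then inverts. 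You instead bound $J$ \emph{directly} through the spectral representation $\vb*{e}^T J\vb*{e} = 2\sum_{k,l}\abs{\langle k|\partial_{\vb*{e}}\hat{\sigma}|l\rangle}^2/(p_k+p_l)$, using the uniform eigenvalue bound $p_k \geq \beta/2^n$ (which is precisely what the definition of $\beta$ encodes) together with the clean identity $\tr[(\partial_{\vb*{e}}\hat{\sigma})^2] = \frac{1}{2}\norm{A\vb*{e}}^2$; your constants check out ($2\cdot\frac{2^n}{2\beta}\cdot\frac{1}{2} = \frac{2^{n-1}}{\beta}$), and your positivity argument via invertibility of $A$ under condition (I) and full rank under condition (II) is sound. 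What each approach buys: yours is more self-contained --- it needs only the textbook eigendecomposition formula for the SLD Fisher information rather than the external duality identity, never requires forming $A^{-1}$ for the upper bound, and in fact delivers the slightly stronger matrix inequality $J \leq \frac{2^{n-1}}{\beta}A^TA$ before relaxing to the operator norm. The paper's route, by contrast, keeps the bound tied to the operationally meaningful quantity (the achievable variance of the optimal measurement absorbing the noise), which is what makes the saturability discussion for global depolarizing noise transparent and is the same machinery reused in the proof of Theorem~2; your argument would not hand you the optimal dual observable $\hat{Y}$ for free. Both proofs use the hypotheses in the same places, so there is no gap --- only a trade between self-containedness and operational interpretability.
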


\begin{proof}
    From Eq. (10) of \cite{watanabe2010optimal}, the following equation holds for $\vb*{e}\in \mathbb{R}^{2^{2n}-1}$ such that $\norm{\vb*{e}} = 1$:
    \begin{eqnarray}
        \vb*{e}^TJ(\mathcal{E}(\hat{\rho}(\vb*{\theta})))^{-1}\vb*{e} = \tr\qty[\mathcal{E}(\hat{\rho})\qty((A^{-1})^T\vb*{e}\cdot \hat{\vb*{\lambda}})^2] - \qty(\tr\qty[\mathcal{E}(\hat{\rho})\qty((A^{-1})^T\vb*{e}\cdot \hat{\vb*{\lambda}})])^2.
    \end{eqnarray}
    By using the density matrix $\hat{\rho}' = \frac{1}{1-\beta}\qty(\mathcal{E}(\hat{\rho}) - \frac{\beta}{2^n}\hat{I})$, we can evaluate $\vb*{e}^TJ(\mathcal{E}(\hat{\rho}(\vb*{\theta})))^{-1}\vb*{e}$ as
    \begin{eqnarray}
        &&\vb*{e}^TJ(\mathcal{E}(\hat{\rho}(\vb*{\theta})))^{-1}\vb*{e}\\
        &=& \tr\qty[\qty(\frac{\beta}{2^n}\hat{I} + (1-\beta)\hat{\rho}')\qty((A^{-1})^T\vb*{e}\cdot \hat{\vb*{\lambda}})^2] - \qty(\tr\qty[\qty(\frac{\beta}{2^n}\hat{I} + (1-\beta)\hat{\rho}')\qty((A^{-1})^T\vb*{e}\cdot \hat{\vb*{\lambda}})])^2 \\
        &=& \beta\tr\qty[\frac{\hat{I}}{2^n}\qty((A^{-1})^T\vb*{e}\cdot \hat{\vb*{\lambda}})^2] + (1-\beta)\tr\qty[\hat{\rho}'\qty((A^{-1})^T\vb*{e}\cdot \hat{\vb*{\lambda}})^2]  - (1-\beta)^2\qty(\tr\qty[\hat{\rho}'\qty((A^{-1})^T\vb*{e}\cdot \hat{\vb*{\lambda}})])^2 \\
        &\geq& \beta\tr\qty[\frac{\hat{I}}{2^n}\qty((A^{-1})^T\vb*{e}\cdot \hat{\vb*{\lambda}})^2] + (1-\beta)^2\qty(\tr\qty[\hat{\rho}'\qty((A^{-1})^T\vb*{e}\cdot \hat{\vb*{\lambda}})^2] - \qty(\tr\qty[\hat{\rho}'\qty((A^{-1})^T\vb*{e}\cdot \hat{\vb*{\lambda}})])^2)\\
        &\geq& \beta\tr\qty[\frac{\hat{I}}{2^n}\qty((A^{-1})^T\vb*{e}\cdot \hat{\vb*{\lambda}})^2]\\
        &=& 2^{1-n}\beta\norm{(A^{-1})^T\vb*{e}}^2 \\
        &\geq& 2^{1-n} \beta\gamma^{2}.
    \end{eqnarray}
    Therefore, the inverse of the quantum \black{Fisher} information matrix satisfies
    \begin{eqnarray}
        0 < 2^{1-n} \beta\gamma^{2}I \leq J(\mathcal{E}(\hat{\rho}(\vb*{\theta})))^{-1},
    \end{eqnarray}
    and thus the quantum \black{Fisher} information matrix $J$ can be evaluated as
    \begin{eqnarray}
        0 < J(\mathcal{E}(\hat{\rho}(\vb*{\theta}))) \leq 2^{n-1}\beta^{-1}\gamma^{-2}I.
    \end{eqnarray}
\end{proof}

Next, as in Fig. \ref{fig_lem_2}, we consider the case of a layered noisy quantum circuit.
Ideally, we wish to apply a sequence of unitary gates $\qty{\mathcal{U}_l}_{l = 1}^L$ to an $n$-qubit initial state $\hat{\rho}_0$ to obtain $\hat{\rho} = \mathcal{U}_L \circ\cdots\circ\mathcal{U}_1(\hat{\rho}_0) = \frac{1}{2^n} \hat{I}  + \frac{1}{2}\vb*{\theta} \cdot \hat{\vb*{\lambda}}$, and measure the state to extract its information.
However, each unitary gate $\mathcal{U}_l$ is followed by a noise channel $\mathcal{E}_l$.
In such a situation, the state after the $L$ noisy gates is $\mathcal{E}_L\circ\mathcal{U}_L\circ\cdots\circ\mathcal{E}_1\circ\mathcal{U}_1(\hat{\rho}_0)$.
We point out that one may {\it compile} the circuit structure so that \black{it} consists only of preparation of the noiseless ideal state and a single error channel.
Namely, the noisy state can also be represented as $\mathcal{E}'(\hat{\rho}(\vb*{\theta}))$, where the effective noise channel is given as $\mathcal{E}' = \mathcal{E}_L\circ\mathcal{U}_L\circ\cdots\circ\mathcal{E}_1\circ\mathcal{U}_1\circ\mathcal{U}_1^\dagger\circ\cdots\circ\mathcal{U}_L^\dagger$.
Therefore, from Lemma \ref{lem_1}, the following lemma holds for the quantum Fisher information matrix $J(\mathcal{E}'(\hat{\rho}(\vb*{\theta})))$ of the noisy state $\mathcal{E}'(\hat{\rho})$.

\begin{figure}[ht]
    \begin{center}
        \includegraphics[width=80mm]{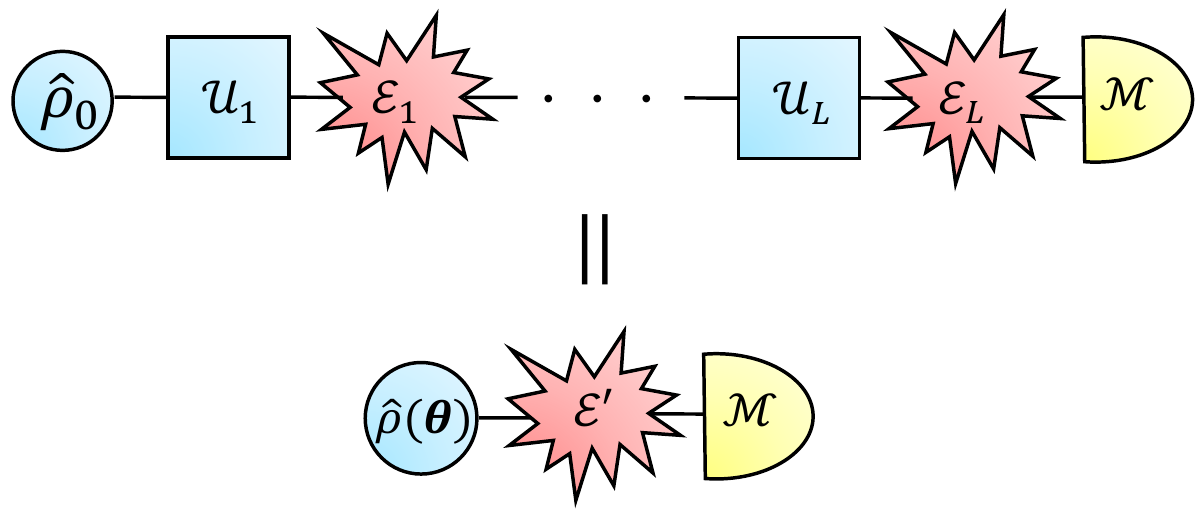}
        \caption{$\hat{\rho} = \mathcal{U}_L \circ\cdots\circ\mathcal{U}_1(\hat{\rho}_0)$ is prepared by applying a sequence of unitary gate $\qty{\mathcal{U}_l}_{l = 1}^L$ to the initial state $\hat{\rho}_0$, but the noise channel $\mathcal{E}_l$ follows the unitary gate $\mathcal{U}_l$. This situation is equivalent to the situation where preparation of the  noiseless ideal state $\hat{\rho}(\vb*{\theta})$ is followed by the noise channel $\mathcal{E} = \mathcal{E}_L\circ\mathcal{U}_L\circ\cdots\circ\mathcal{E}_1\circ\mathcal{U}_1\circ\mathcal{U}_1^\dagger\circ\cdots\circ\mathcal{U}_L^\dagger$. }
        \label{fig_lem_2}
    \end{center}
\end{figure}

\begin{lem}
    \label{lem_2}
    For the setup shown in Fig. \ref{lem_2}, suppose that the noise $\mathcal{E}_l$ satisfies the conditions (I) and (II) stated in Theorem~\ref{thm_1} in the main text.
    Then, the quantum Fisher information matrix $J(\mathcal{E}'(\hat{\rho}(\vb*{\theta})))$ satisfies
    \begin{eqnarray}
        0 < J(\mathcal{E}'(\hat{\rho}(\vb*{\theta}))) \leq 2^{n-1}\beta^{-1}\gamma^{-2L}I,
    \end{eqnarray}
    where $\gamma = \min_l\Gamma(\mathcal{E}_l)$ and $\beta$ is the largest $0 < \beta < 1$ such that $\mathcal{E}_l(\hat{\rho}) - \frac{\beta}{2^n}\hat{I} \geq 0$ for all $\hat{\rho}$ and $l$.
    Suppose further that the noise $\mathcal{E}_{l}$ is unital.
    Then, the quantum Fisher information matrix $J(\mathcal{E}'(\hat{\rho}(\vb*{\theta})))$ satisfies
    \begin{eqnarray}
        0 < J(\mathcal{E}'(\hat{\rho}(\vb*{\theta}))) \leq 2^{n-1}(1 - (1-\beta)^L)^{-1}\gamma^{-2L}I.
    \end{eqnarray}
\end{lem}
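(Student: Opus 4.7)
The plan is to derive Lemma~\ref{lem_2} as a direct corollary of Lemma~\ref{lem_1} applied to the compiled noise channel $\mathcal{E}' = \mathcal{E}_L \circ \mathcal{U}_L \circ \cdots \circ \mathcal{E}_1 \circ \mathcal{U}_1 \circ \mathcal{U}_1^{\dagger} \circ \cdots \circ \mathcal{U}_L^{\dagger}$, which absorbs all layers into a single error channel acting on the noiseless state $\hat{\rho}(\vb*{\theta})$. To invoke Lemma~\ref{lem_1} I must establish two facts about $\mathcal{E}'$: first, a lower bound $\Gamma(\mathcal{E}') \geq \gamma^L$ on its noise strength, and second, a scalar $\beta' > 0$ with $\mathcal{E}'(\hat{\rho}) - (\beta'/2^n)\hat{I} \geq 0$ for every density matrix $\hat{\rho}$, where $\beta' \geq \beta$ suffices in the general case and $\beta' \geq 1-(1-\beta)^L$ is needed in the unital case. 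Conditions (I) and (II) of Theorem~\ref{thm_1} for $\mathcal{E}'$ then come for free: invertibility propagates through composition, and $\beta'>0$ forces $\mathcal{E}'(\hat{\rho})$ to be full-rank.

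For the strength bound, I use that the unital part of the Pauli transfer matrix of a composition factorizes, so $A_{\mathcal{E}'} = A_{\mathcal{E}_L} A_{\mathcal{U}_L} \cdots A_{\mathcal{E}_1} A_{\mathcal{U}_1} A_{\mathcal{U}_1^{\dagger}} \cdots A_{\mathcal{U}_L^{\dagger}}$. Every unitary channel has an orthogonal PTM of operator norm $1$, so sub-multiplicativity of the operator norm yields $\norm{A_{\mathcal{E}'}} \leq \prod_l \norm{A_l} \leq \gamma^{-L}$, i.e.\ $\Gamma(\mathcal{E}') \geq \gamma^L$.

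For the positivity bound in the general case, $\mathcal{E}'(\hat{\rho}) = \mathcal{E}_L(\hat{\tau})$ for the density matrix $\hat{\tau}$ produced by the earlier layers, so the defining property of $\beta$ immediately gives $\mathcal{E}'(\hat{\rho}) \geq (\beta/2^n)\hat{I}$. For the improved bound in the unital case I will prove by induction on $k \leq L$ that the partial compilation $\mathcal{E}'_k$ satisfies $\mathcal{E}'_k(\hat{\rho}) \geq (1-(1-\beta)^k)/2^n \cdot \hat{I}$. In the inductive step, the hypothesis lets me decompose the input to $\mathcal{E}_k$, after the unital unitary layer, as $\alpha\, \hat{I}/2^n + (1-\alpha)\hat{\sigma}'$ with $\alpha = 1-(1-\beta)^{k-1}$ and $\hat{\sigma}'$ a density matrix; unitality of $\mathcal{E}_k$ maps the first term to $\alpha\, \hat{I}/2^n$ and the base bound on the second gives at least $(1-\alpha)\beta/2^n \cdot \hat{I}$, so $\mathcal{E}'_k(\hat{\rho}) \geq (\alpha + (1-\alpha)\beta)/2^n \cdot \hat{I} = (1-(1-\beta)^k)/2^n \cdot \hat{I}$, closing the induction.

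The main obstacle is this unital induction: without unitality, each subsequent noise would partially wash away the accumulated $\hat{I}/2^n$-component built up by earlier layers, leaving only the final layer's base bound and forcing the weaker $\beta' = \beta$. Unitality of each $\mathcal{E}_k$ pins that identity component in place, so the bound can grow geometrically toward $1$. Once both structural bounds on $\mathcal{E}'$ are in hand, plugging $\gamma' = \gamma^L$ and the appropriate $\beta'$ into Lemma~\ref{lem_1} produces both claimed inequalities on $J(\mathcal{E}'(\hat{\rho}(\vb*{\theta})))$.
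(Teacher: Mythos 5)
Your proposal is correct and follows essentially the same route as the paper: compiling the layers into a single effective channel, bounding $\Gamma(\mathcal{E}') \geq \gamma^{L}$ by multiplicativity of the unital PTM parts (with unitary layers contributing norm $1$), using the last layer's $\beta$ bound in the general case, and running the same geometric induction $\alpha \mapsto \alpha + (1-\alpha)\beta$ under unitality to reach $\beta' = 1-(1-\beta)^{L}$ before invoking Lemma~\ref{lem_1}. The only cosmetic difference is that you spell out the PTM factorization and the induction explicitly where the paper states the mixing identity $\mathcal{E}_l\bigl((1-p)\hat{I}/2^n + p\hat{\rho}'\bigr) = ((1-p)+\beta p)\hat{I}/2^n + (1-\beta)p\hat{\rho}''$ and iterates it.
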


\begin{proof}
    Since $\Gamma(\mathcal{E}') \geq  \prod_{l=1}^L \Gamma(E_l) \geq \gamma^L$ and $\mathcal{E}(\hat{\rho}) - \frac{\beta}{2^n}\hat{I} \geq 0$ for any $\hat{\rho}$, the following inequality holds from Lemma \ref{lem_1}.
    \begin{eqnarray}
        0 < J(\mathcal{E}'(\hat{\rho}(\vb*{\theta}))) \leq 2^{n-1}\beta^{-1}\gamma^{-2L}I.
    \end{eqnarray}
    Also, especially when $\mathcal{E}_l$ is unital, for $0 \leq p \leq 1$ and density matrix $\hat{\rho}'$,
    \begin{eqnarray}
        \mathcal{E}_l\qty((1-p)\frac{\hat{I}}{2^n} + p\hat{\rho}') = ((1-p) + \beta p) \frac{\hat{I}}{2^n} + (1-\beta)p\hat{\rho}''
    \end{eqnarray}
    holds where $\hat{\rho}''$ is a density matrix which satisfies $\hat{\rho}'' = \frac{1}{1-\beta}\qty(\mathcal{E}_l(\hat{\rho}') - \frac{\beta}{2^n}\hat{I})$.
    Thus, by using a density matrix $\hat{\rho}'''$, $\mathcal{E}(\hat{\rho})$ can be expressed as
    \begin{eqnarray}
        \mathcal{E}(\hat{\rho}) = (1 - (1-\beta)^L)\frac{\hat{I}}{2^n} + (1-\beta)^L\hat{\rho}'''.
    \end{eqnarray}
    Therefore, since $\mathcal{E}(\hat{\rho}) - \frac{(1 - (1-\beta)^L)}{2^n}I \geq 0$ for all $\hat{\rho}$,
    \begin{eqnarray}
        0 < J(\mathcal{E}'(\hat{\rho}(\vb*{\theta}))) \leq 2^{n-1}(1 - (1-\beta)^L)^{-1}\gamma^{-2L}I
    \end{eqnarray}
    holds from Lemma \ref{lem_1}.
\end{proof}

Then, as shown in Fig. \ref{fig_lem_3}, we allow stochastic operations by adding classical register $\hat{\rho}_\mathrm{c}$ coupled with the system qubits via the additional operation $\mathcal{C}_l$.
Classical register ${\hat\rho}_{\mathrm{c}}$ is initialized with probabilistic mixtures of computational bases as $\hat{\rho}_{\mathrm{c}} = \sum_{i}p_{i}\ketbra{i}$, and additional operation $\mathcal{C}_{l}$ performs unitary operation $\mathcal{C}_{li}$ according to the state of the classical register as $\mathcal{C}_{l} = \sum_{i} \mathcal{C}_{li}\otimes\ketbra{i}$.

In this case, the quantum state before the measurement is expressed as $\mathcal{E}'(\hat{\rho}(\vb*{\theta})\otimes\hat{\rho}_\mathrm{c}) = \sum_{i} p_{i}\mathcal{E}'_{i}(\hat{\rho}(\vb*{\theta})) \otimes \ketbra{i}$, where $\mathcal{E}' = \mathcal{E}_{L}\circ\mathcal{U}_L\circ\mathcal{C}_{L}\circ\cdots\circ\mathcal{E}_{1}\circ\mathcal{U}_1\circ\mathcal{C}_{1}\circ\mathcal{U}_1^{-1}\circ\cdots\circ\mathcal{U}_L^{-1}$ and $\mathcal{E}_{i} = \mathcal{E}_L\circ\mathcal{U}_L\circ\mathcal{C}_{L-1i}\circ\cdots\circ\mathcal{E}_1\circ\mathcal{U}_1\circ\mathcal{C}_{0i}\circ\mathcal{U}_1^\dagger\circ\cdots\circ\mathcal{U}_L^\dagger$ represents the effective noise channel.
Therefore, from the convexity and additivity of the quantum Fisher information matrix:
\begin{eqnarray}
    J(\mathcal{E}'(\hat{\rho}(\vb*{\theta})\otimes\hat{\rho}_\mathrm{c})) 
    &\leq& \sum_{i} p_{i} J(\mathcal{E}'_{i}(\hat{\rho}(\vb*{\theta})) \otimes \ketbra{i}) \\
    &=& \sum_{i} p_{i} J(\mathcal{E}'_{i}(\hat{\rho}(\vb*{\theta}))),
\end{eqnarray}
the following lemma can be derived in the same way as in Lemma \ref{lem_2}.

\begin{figure}[ht]
    \begin{center}
        \includegraphics[width=100mm]{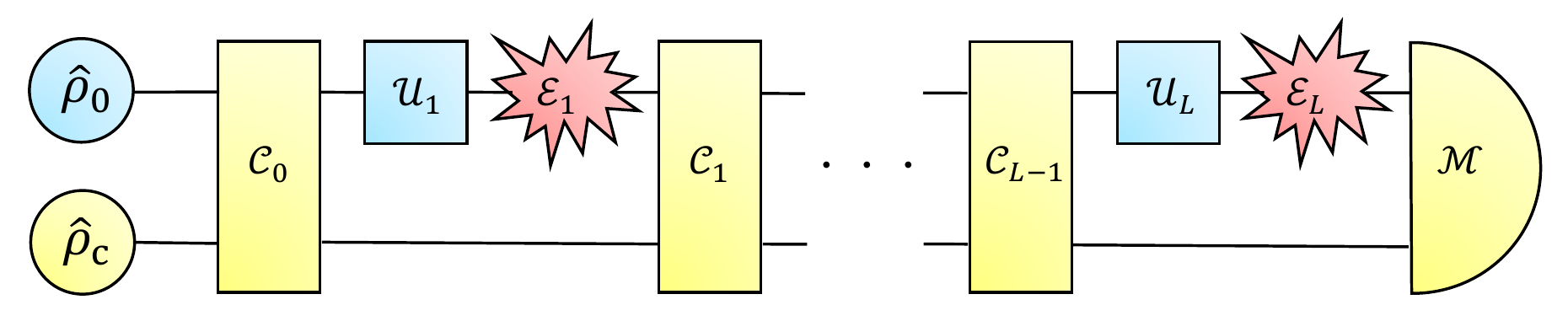}
        \caption{The setup where we allow stochastic operations by adding classical register $\hat{\rho}_\mathrm{c}$ coupled with the system qubits via the additional operation $\mathcal{C}_i$.}
        \label{fig_lem_3}
    \end{center}
\end{figure}

\begin{lem}
    \label{lem_3}
    For the setup shown in Fig. \ref{fig_lem_3}, suppose that the noise $\mathcal{E}_l$ satisfies the conditions (I) and (II) stated in Theorem~\ref{thm_1} in the main text.
    Then, the quantum Fisher information matrix $J(\mathcal{E}'(\hat{\rho}(\vb*{\theta})\otimes\hat{\rho}_\mathrm{c}))$ satisfies
    \begin{eqnarray}
        0 < J(\mathcal{E}'(\hat{\rho}(\vb*{\theta})\otimes\hat{\rho}_\mathrm{c})) \leq 2^{n-1}\beta^{-1}\gamma^{-2L}I,
    \end{eqnarray}
    where $\gamma = \min_l \Gamma(\mathcal{E}_l)$ and $\beta$ is the largest $0 < \beta < 1$ such that $\mathcal{E}_l(\hat{\rho}) - \frac{\beta}{2^n}\hat{I} \geq 0$ for all $\hat{\rho}$ and $l$.
    Suppose further that the noise $\mathcal{E}_{l}$ is unital.
    Then, the quantum Fisher information matrix $J(\mathcal{E}'(\hat{\rho}(\vb*{\theta})\otimes\hat{\rho}_\mathrm{c}))$ satisfies
    \begin{eqnarray}
        0 < J(\mathcal{E}'(\hat{\rho}(\vb*{\theta})\otimes\hat{\rho}_\mathrm{c})) \leq 2^{n-1}(1 - (1-\beta)^L)^{-1}\gamma^{-2L}I.
    \end{eqnarray}
\end{lem}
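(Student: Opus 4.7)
My plan is to promote the decomposition sketched immediately before the statement into a rigorous argument that reduces everything to a branch-wise application of Lemma~\ref{lem_2}. Because the classical register is initialised as a probabilistic mixture of computational basis states and each $\mathcal{C}_l$ acts controlled on this basis, the pre-measurement state is block-diagonal in the register:
\begin{equation}
\mathcal{E}'(\hat{\rho}(\vb*{\theta})\otimes\hat{\rho}_\mathrm{c}) \;=\; \sum_{i} p_{i}\, \mathcal{E}'_{i}(\hat{\rho}(\vb*{\theta})) \otimes \ketbra{i},
\end{equation}
where $\mathcal{E}'_{i} = \mathcal{E}_L\circ\mathcal{U}_L\circ\mathcal{C}_{L-1,i}\circ\cdots\circ\mathcal{E}_1\circ\mathcal{U}_1\circ\mathcal{C}_{0,i}\circ\mathcal{U}_1^{\dagger}\circ\cdots\circ\mathcal{U}_L^{\dagger}$ is the effective channel conditioned on the register reading $i$. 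Each branch is therefore exactly of the form treated in Lemma~\ref{lem_2}, up to a parameter-independent tensor factor.

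Next I would invoke two standard properties of the SLD Fisher information: (i) convexity under classical mixtures, $J\!\left(\sum_i p_i \hat{\rho}_i\right)\leq \sum_i p_i J(\hat{\rho}_i)$, and (ii) additivity on tensor products with parameter-independent factors, $J(\hat{\rho}(\vb*{\theta})\otimes\hat{\sigma}) = J(\hat{\rho}(\vb*{\theta}))$ whenever $\hat{\sigma}$ does not depend on $\vb*{\theta}$. Combining them yields the displayed inequality
\begin{equation}
J\!\left(\mathcal{E}'(\hat{\rho}(\vb*{\theta})\otimes\hat{\rho}_\mathrm{c})\right) \;\leq\; \sum_i p_i\, J\!\left(\mathcal{E}'_i(\hat{\rho}(\vb*{\theta}))\right).
\end{equation}
It then suffices to bound each summand uniformly by Lemma~\ref{lem_2} and use $\sum_i p_i = 1$.

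To do so I verify that every branch $\mathcal{E}'_i$ inherits the hypotheses of Lemma~\ref{lem_2}. Since the interleaved $\mathcal{C}_{l,i}$ and $\mathcal{U}_l$ are unitaries, they preserve injectivity and full-rank output of the $\mathcal{E}_l$ (conditions (I) and (II)); they also leave the minimal noise strength and the constant $\beta$ unchanged, because unitary conjugation is norm-preserving and does not alter the spectrum. Consequently $\Gamma(\mathcal{E}'_i)\geq \gamma^L$ and, in the unital case, each $\mathcal{E}'_i$ remains unital since unital maps compose to unital maps. Lemma~\ref{lem_2} then gives the two promised bounds branchwise, and averaging with weights $p_i$ reproduces the statement.

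The only subtlety I anticipate is the clean handling of the tensor factor $\otimes\ketbra{i}$ in the SLD formalism: strictly speaking the register state lives on a fixed classical subspace, so one must check that the SLD operators for $\vb*{\theta}$ act trivially there and the block decomposition of the Fisher matrix coincides with the naive sum. This is routine but worth spelling out, and is the only place where care beyond the mere quoting of convexity and additivity seems required; everything else follows by direct reduction to Lemma~\ref{lem_2}.
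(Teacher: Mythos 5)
Your proposal is correct and takes essentially the same route as the paper: the same block-diagonal decomposition $\mathcal{E}'(\hat{\rho}(\vb*{\theta})\otimes\hat{\rho}_\mathrm{c}) = \sum_{i} p_{i}\,\mathcal{E}'_{i}(\hat{\rho}(\vb*{\theta}))\otimes\ketbra{i}$, followed by convexity and additivity of the SLD Fisher information and a branchwise application of Lemma~\ref{lem_2}. Your explicit check that each branch $\mathcal{E}'_i$ inherits $\Gamma(\mathcal{E}'_i)\geq\gamma^{L}$, the constant $\beta$, and unitality merely spells out what the paper compresses into ``derived in the same way as in Lemma~\ref{lem_2}.''
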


Finally, we consider the virtual quantum circuit in Fig. \ref{fig_mitigation_circuit} in the main text, where the POVM measurement $\mathcal{M}$ can be performed on $N$ copies of the quantum circuit in Fig. \ref{fig_lem_3} with varying $\mathcal{E}_l$ and $\mathcal{C}_l$ among each copy.
Then, from the additivity of the quantum Fisher information matrix:
\begin{eqnarray}
    J\qty(\bigotimes_{m=1}^N\mathcal{E}'_{m}(\hat{\rho}(\vb*{\theta})\otimes\hat{\rho}_{\mathrm{c},m})) = \sum_{m=1}^N J(\mathcal{E}'_{m}(\hat{\rho}(\vb*{\theta})\otimes\hat{\rho}_{\mathrm{c},m}))
\end{eqnarray}
and Lemma \ref{lem_3}, we obtain the following lemma.

\begin{lem}
    \label{lem_4}
    Suppose that the noise channel $\mathcal{E}_{lm}$ satisfies the conditions (I) and (II) stated in Theorem~\ref{thm_1} in the main text.
    Then, the quantum Fisher information matrix $J$ of the state before the measurement satisfies
    \begin{eqnarray}
        0 < J\qty(\bigotimes_{m=1}^N\mathcal{E}'_{m}(\hat{\rho}(\vb*{\theta})\otimes\hat{\rho}_{\mathrm{c},m})) \leq 2^{n-1}N\beta^{-1}\gamma^{-2L}I,
    \end{eqnarray}
    where 
    $\beta$ is the largest $0 < \beta < 1$ such that $\mathcal{E}_{lm}(\hat{\rho}) - \frac{\beta}{2^n}\hat{I} \geq 0$ for all $\hat{\rho}$, $l$, and $m$.
    Suppose further that the noise $\mathcal{E}_{lm}$ is unital.
    Then, the quantum Fisher information matrix $J$ satisfies
    \begin{eqnarray}
        0 < J\qty(\bigotimes_{m=1}^N\mathcal{E}'_{m}(\hat{\rho}(\vb*{\theta})\otimes\hat{\rho}_{\mathrm{c},m})) \leq 2^{n-1}N(1 - (1-\beta)^L)^{-1}\gamma^{-2L}I.
    \end{eqnarray}
\end{lem}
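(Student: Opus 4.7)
The plan is short: the $N$-copy bound in Lemma \ref{lem_4} follows from Lemma \ref{lem_3} together with the tensor-product additivity of the SLD Fisher information matrix. Structurally, the virtual circuit is designed so that each of the $N$ copies is prepared independently (its own classical register $\hat{\rho}_{c,m}$, its own boosted noise schedule $\{\mathcal{E}_{lm}\}_l$, and its own conditional unitaries $\{\mathcal{C}_{lm}\}_l$), and the copies are joined only by the final joint POVM $\mathcal{M}$. Since the QFI matrix is a property of the state alone, the relevant pre-measurement state is the product
\begin{equation}
\bigotimes_{m=1}^{N} \mathcal{E}'_m\!\left(\hat{\rho}(\vb*{\theta}) \otimes \hat{\rho}_{c,m}\right).
\end{equation}

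First, I would invoke the additivity of the SLD Fisher information matrix under tensor products of states that share the same parameter $\vb*{\theta}$. The underlying calculation is standard: for $\rho_1(\vb*{\theta}) \otimes \rho_2(\vb*{\theta})$, the ansatz $\hat{L}_i = \hat{L}_i^{(1)} \otimes \hat{I} + \hat{I} \otimes \hat{L}_i^{(2)}$ satisfies the SLD equation on each factor, and when one assembles $J_{ij} = \frac{1}{2}\tr[\rho\{\hat{L}_i,\hat{L}_j\}]$ the cross terms drop out because $\tr[\rho_k \hat{L}_i^{(k)}] = \tr[\partial_i \rho_k] = \partial_i \tr[\rho_k] = 0$. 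Iterating over $m$ yields
\begin{equation}
J\!\left(\bigotimes_{m=1}^N \mathcal{E}'_m(\hat{\rho}(\vb*{\theta}) \otimes \hat{\rho}_{c,m})\right) = \sum_{m=1}^N J\!\left(\mathcal{E}'_m(\hat{\rho}(\vb*{\theta}) \otimes \hat{\rho}_{c,m})\right),
\end{equation}
which is the identity already quoted in the paragraph preceding the lemma.

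Next, I would apply Lemma \ref{lem_3} to each summand. The parameters $\gamma$ and $\beta$ appearing in the statement of Lemma \ref{lem_4} are defined as the worst case over all $(l,m)$, so by construction they dominate the corresponding single-copy parameters; hence each summand inherits the bound $J_m \leq 2^{n-1}\beta^{-1}\gamma^{-2L}\, I$ in general, and $J_m \leq 2^{n-1}(1-(1-\beta)^L)^{-1}\gamma^{-2L}\, I$ when every $\mathcal{E}_{lm}$ is unital. Summing the $N$ identical bounds delivers both claimed inequalities, while the strict positivity $J > 0$ follows immediately from $J_m > 0$, which is part of the conclusion of Lemma \ref{lem_3}.

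There is no real obstacle at this step: the heavy lifting was done in Lemmas \ref{lem_1}--\ref{lem_3}, which established the per-copy bound by controlling the effective noise channel $\mathcal{E}'_m$ through its strength $\Gamma(\mathcal{E}'_m) \geq \gamma^L$ and its depolarizing floor $\beta$ (and, in the unital case, the accumulated floor $1-(1-\beta)^L$). The only conceptual check needed for Lemma \ref{lem_4} is that including a distinct classical register on each copy does not generate cross-copy correlations in the pre-measurement state. This is manifest from the product structure above and the fact that each $\hat{\rho}_{c,m}$ is a separately initialized mixture of computational basis states.
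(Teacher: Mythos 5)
Your proposal is correct and follows the paper's own route exactly: the paper likewise obtains Lemma \ref{lem_4} by combining the additivity of the SLD Fisher information matrix over the tensor product $\bigotimes_{m=1}^N \mathcal{E}'_m(\hat{\rho}(\vb*{\theta})\otimes\hat{\rho}_{\mathrm{c},m})$ with the per-copy bound of Lemma \ref{lem_3}, using the worst-case definitions of $\gamma$ and $\beta$ over all $(l,m)$. Your added justification of additivity (the vanishing cross terms via $\tr[\partial_i\hat{\rho}_k]=0$) and the check that the copies carry independent classical registers are details the paper leaves implicit, but they are accurate and change nothing in substance.
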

By combining Lemma \ref{lem_4} with the quantum Cram\'{e}r-Rao inequality, we immediately obtain Theorem \ref{thm_1} in the main text.

We here show $\gamma > 1$ for unital noise channels satisfying the condition (II) of the Theorem \ref{thm_1} in the main text.
The proof proceeds in the same way as in Corollary A.1 of Refs.~\cite{sagawa2022entropy}.
Let $\hat{\rho} =  \frac{1}{2^n} \hat{I}  + \frac{1}{2}\vb*{\theta} \cdot \hat{\vb*{\lambda}}\neq \frac{1}{2^n} \hat{I}$ be a quantum state and $\rho = \sum_i p_i\ketbra{i}$ be its spectral decomposition.
We consider the matrix
\begin{eqnarray}
    \begin{pmatrix}
        \mathcal{E}_{lm}(\hat{\rho}^2) & \mathcal{E}_{lm}(\hat{\rho}) \\
        \mathcal{E}_{lm}(\hat{\rho}) & \mathcal{E}_{lm}(\hat{I}) \\
    \end{pmatrix}
    =
    \sum_i
    \begin{pmatrix}
        p_i^2 & p_i \\
        p_i & 1 \\
    \end{pmatrix}
    \otimes \mathcal{E}_{lm}(\ketbra{i}) > 0.
\end{eqnarray}
The last inequality holds because $\mathcal{E}_{lm}(\ketbra{i}) > 0$ from condition (II) and there exist $i\neq i'$ such that $p_i \neq p_{i'}$.
This implies 
\begin{eqnarray}
    &&
    \begin{pmatrix}
        \hat{I} & -\mathcal{E}_{lm}(\hat{\rho})\mathcal{E}_{lm}(\hat{I})^{-1} \\
        0 & \hat{I} \\
    \end{pmatrix}
    \begin{pmatrix}
        \mathcal{E}_{lm}(\hat{\rho}^2) & \mathcal{E}_{lm}(\hat{\rho}) \\
        \mathcal{E}_{lm}(\hat{\rho}) & \mathcal{E}_{lm}(\hat{I}) \\
    \end{pmatrix}
    \begin{pmatrix}
        \hat{I} & -\mathcal{E}_{lm}(\hat{\rho})\mathcal{E}_{lm}(\hat{I})^{-1} \\
        0 & \hat{I} \\
    \end{pmatrix}^\dagger\\
    &=&
    \begin{pmatrix}
        \mathcal{E}_{lm}(\hat{\rho}^2) - \mathcal{E}_{lm}(\hat{\rho}^2)\mathcal{E}_{lm}(\hat{I})^{-1}\mathcal{E}_{lm}(\hat{\rho}) & 0 \\
        0 & \mathcal{E}_{lm}(\hat{I}). \\
    \end{pmatrix}\\
    &>&0.
\end{eqnarray}
Thus, 
\begin{eqnarray}
    \mathrm{\tr}[\hat{\rho}^2] 
    =\mathrm{\tr}[\mathcal{E}_{lm}(\hat{\rho}^2)] 
    > \mathrm{\tr}[\mathcal{E}_{lm}(\hat{\rho})\mathcal{E}_{lm}(\hat{I})^{-1}\mathcal{E}_{lm}(\hat{\rho})] 
    = \mathrm{\tr}[\mathcal{E}_{lm}(\hat{\rho})^2].
\end{eqnarray}
This means
\begin{eqnarray}
    \norm{\vb*{\theta}} > \norm{A_{lm}\vb*{\theta}}
\end{eqnarray}
for $\vb*{\theta}\neq \vb*{0}$.
Therefore, $\Gamma(\mathcal{E}_{lm}) > 1$ for unital noise satisfying the condition (II), meaning that $\gamma > 1$.
We note that this is not the case for some non-unital noise and $\Gamma(\mathcal{E}_{lm})$ may even be less than 1~\cite{ozawa2000entanglement}.

\section{Haar integration and unitary $t$-design}\label{app:haar_integration}
Before presenting the details of Theorem \ref{thm_2} in the main text, here we briefly review the formulas regarding the Haar integrals, namely the average over unitary group with respect to the Haar measure.
This problem was initially considered by Weingarten in 1978 in the asymptotic limit of large matrix size~\cite{weingarten1978asymptotic} and then expanded to expressions for finite-size matrix by Refs.~\cite{collins2003moments, collins2006integration}.
The element-wise formula involving $p$-th power of random unitary matrix elements is explicitly given as
\begin{eqnarray}
     \mathbb{E}_{U} \left[ U_{i_1j_1} \cdots U_{i_p j_p} U^*_{k_1 l_1} \cdots U^*_{k_p l_p} \right] = \sum_{\sigma, \tau \in \mathcal{S}_p}
     \delta_{i_1 k_{\sigma(1)}} \cdots \delta_{i_p k_{\sigma(p)}} \delta_{j_1 l_{\tau(1)}} \cdots \delta_{j_p l_{\tau(p)}} {\rm Wg}_{p,d}(\sigma^{-1}\tau),
\end{eqnarray}
where \red{$\mathbb{E}_U$ denotes the average under Haar random unitary and} $\sigma, \tau$ is a permutation over $p$ indices taken from the entire permutation group $\mathcal{S}_p$. 
This implies that there are in general $(p!)^2$ terms in total for the $p$-th moment formula.
The coefficient $\rm Wg$ is the so-called Weingarten function~\cite{weingarten1978asymptotic} which is given for lower moments as
\begin{eqnarray}
    (p=1)\  &:&\  {\rm Wg}_{1,d}([1]) = \frac{1}{d}, \\
    (p=2) \ &:&\ {\rm Wg}_{2,d}([1,1]) = \frac{1}{d^2 - 1}, \ \ 
    {\rm Wg}_{2,d}([2]) = -\frac{1}{d(d^2 - 1)}.
\end{eqnarray}

\noindent{\bf First moment.} 
The element-wise formula is explicitly given as
\begin{eqnarray}
    \mathbb{E}_{U} \left[ U_{ij} U^*_{kl} \right] = \frac{1}{d} \delta_{ik}\delta_{jl}.
\end{eqnarray}
From this equation, we derive the well-known formula as
\begin{eqnarray}
    \mathbb{E}_{U} \left[(UAU^\dag)_{ij} \right] &=& \sum_{kl} \mathbb{E}_U\qty[U_{ik}A_{kl}U^*_{jl}] \\
    &=& \frac{1}{d}\tr[A]\delta_{ij}.
\end{eqnarray}

\noindent {\bf Second moment.}
Similar to the first moment, we can also write down the formula as
\begin{eqnarray}
    \mathbb{E}_{U} \left[ U_{i_1 j_1}U_{i_2 j_2} U^*_{k_1 l_1} U^*_{k_2 l_2} \right] = \frac{
    \delta_{i_1 k_1}\delta_{i_2 k_2} \delta_{j_1 l_1} \delta_{j_2 l_2}
    + \delta_{i_1 k_2}\delta_{i_2 k_1} \delta_{j_1 l_2} \delta_{j_2 l_1}
    }{d^2 - 1} \\
    - \frac{
    \delta_{i_1 k_2} \delta_{i_2 k_1} \delta_{j_1 l_1} \delta_{j_2 l_2}
    + \delta_{i_1 k_1} \delta_{i_2 k_2} \delta_{j_1 l_2} \delta_{j_2 l_1}
    }{d(d^2 - 1)}.
\end{eqnarray}
This yields the second-moment Haar integral formula as
\begin{eqnarray}
    \mathbb{E}_{U} \left[ \tr[UAU^\dag B U C U^\dag D] \right] &=& \frac{\tr[A C]\tr[B]\tr[D] + \tr[A]\tr[C] \tr[BD]}{d^2 -1} - \frac{\tr[AC]\tr[BD] + \tr[A]\tr[B] \tr[C] \tr[D]}{d(d^2 - 1)},\nonumber \\
    \mathbb{E}_{U} \left[ \tr[UAU^\dag B] \tr[U C U^\dag D] \right]&=& \frac{\tr[AC]\tr[BD] + \tr[A]\tr[B] \tr[C] \tr[D]}{d^2 -1} - \frac{\tr[A C]\tr[B]\tr[D] + \tr[A]\tr[C] \tr[BD]}{d(d^2 - 1)}.\nonumber
\end{eqnarray}

We can define unitary $t$-design~\cite{dankert2009exact} by using $t$-th moment of Haar integration: unitary $t$-design is a set of unitaries $\qty{U_1,\cdots,U_K}$ which satisfies
\begin{eqnarray}
    \frac{1}{K}\sum_{k=1}^K \left[ (U_k)_{i_1j_1} \cdots (U_k)_{i_t j_t} (U_k)^*_{k_1 l_1} \cdots (U_k)^*_{k_t l_t} \right] 
    = \mathbb{E}_{U} \left[ U_{i_1j_1} \cdots U_{i_t j_t} U^*_{k_1 l_1} \cdots U^*_{k_t l_t} \right].
\end{eqnarray}
For instance, $n$-qubit Clifford group is a unitary 3-design~\cite{webb2015clifford, zhu2017multiqubit} (thus it is also a unitary 2-design).
In the following section, we only think of the second moment of Haar integration at most, so we denote the average under unitary $2$-design as $\mathbb{E}_U$.

\section{Details and Proof of Theorem 2 in the main text}
In this section, we present details and proof of Theorem \ref{thm_2} in the main text.
Let us first reformulate the problem setup.
We think of running a noisy random circuit exposed to local noise, where each unitary gate is chosen randomly from unitary 2-design, and obtain an unbiased estimator of an expectation value of a traceless observable $\hat{X}$ for a noiseless circuit.
Concretely, We are given $N$ copies of the noisy state $\hat{\rho}_{\rm noisy} = \mathcal{E}_L\circ\mathcal{U}_L\circ\cdots\circ\mathcal{E}_1\circ\mathcal{U}_1(\hat{\rho}_0)$, where $\mathcal{E}_l = \bigotimes_{i=1}^n \mathcal{E}_l^{(i)}$ is a local noise and $\mathcal{U}_l(\cdot) = U_l\cdot U_l^\dagger$ is drawn from unitary 2-design, and we aim to obtain an unbiased estimator of the expectation value $\ev*{\hat{X}} \equiv \mathrm{tr}[\hat{\rho}\hat{X}]$, where $\hat{\rho} = \mathcal{U}_L\circ\cdots\circ\mathcal{U}_1(\hat{\rho}_0)$.
We also assume that the initial state $\hat{\rho}_0$ is pure and the noise channel $\mathcal{E}_l$ satisfies the condition (I) in Theorem \ref{thm_1}.

For $2^n$-dimensional CPTP map $\mathcal{E}$ satisfying the condition (I), we can define its inverse map $\mathcal{E}^{-1}(\cdot) = \sum_i p_i E_i \cdot E_i^\dagger = \sum_i q_{ij} P_i\cdot P_j$, where $p_i\in\mathbb{R}$, $q_{ij}\in\mathbb{C}$ satisfying $q_{ij} = q_{ji}^*$, and $P_i$ are $n$-qubit Pauli operators including the identity operator~\cite{jiang2021physical}.
Since $\mathcal{E}^{-1}$ is also trace-preserving~\cite{jiang2021physical}, we can show $\sum_ip_iE_i^\dagger E_i =  \sum_i q_{ij} P_jP_i = \hat{I}$.
Especially when $\mathcal{E}$ is unital, we can also show $\sum_ip_iE_iE_i^\dagger =  \sum_i q_{ij} P_iP_j = \hat{I}$.
By using the inverse $\mathcal{E}^{-1}$, we can define the noise strength of $\mathcal{E}$ as 
\begin{eqnarray}
    \nu(\mathcal{E}^{-1}) = \mathrm{tr}[(\red{\hat{I}}\otimes\mathcal{E}^{-1}(\ketbra{\Gamma}))^2]/d^2 = \sum_{ij}p_ip_j \abs*{\mathrm{tr}[E_iE_j^\dagger]}^2/d^2 = \sum_{ij}\abs{q_{ij}}^2,
\end{eqnarray}
where $\ket{\Gamma} = \sum_{i=1}^d\ket{i}\ket{i}$ is the maximally entangled state and $d=2^n$ is the dimension of the system.
From definition, $\nu(\mathcal{E}^{-1})$ is multiplicative: for local noise channel $\mathcal{E} = \bigotimes_{i=1}^n \mathcal{E}^{(i)}$, 
\begin{eqnarray}
    \nu(\mathcal{E}^{-1}) = \prod_{i=1}^n \nu((\mathcal{E}^{(i)})^{-1}).
\end{eqnarray}

Now we are ready to state the details of Theorem \ref{thm_2}, which gives the scaling of the lower bound of $N$.
We first present the result for the unital noise.
\begin{thm}
    \label{thm_3}
    Let $U_1, U_2, ..., U_{L-1}, U_L$ be $n$-qubit unitary gate drawn from a set of random unitary that form unitary 2-design, and $\mathcal{E}_l = \bigotimes_{i=1}^N \mathcal{E}_l^{(i)} $ be  a local unital noise channel \red{satisfying condition (I) in the Theorem \ref{thm_1}}.
    Then, \red{the average of the number of copies of $\rho_{\mathrm{noisy}}$} required to perform unbiased estimation of $\ev*{\hat{X}}$ with standard deviation $\varepsilon$ over $\{U_1, ..., U_L\}$ in the above setup is lower bounded as
    \begin{eqnarray}
        \mathbb{E}_{U_1\cdots U_L}[N]
    &\geq& \frac{\norm{\vb*{x}}^2}{\varepsilon^2}\qty(\prod_{L=1}^{l-1} \qty(\frac{4^n\prod_{i=1}^n\nu((\mathcal{E}_{l}^{(i)})^{-1})-1}{4^n-1}) - \frac{2^n-2}{4^n-1})\\
    &\sim& \frac{\norm{\vb*{x}}^2}{\varepsilon^2}\prod_{l=1}^{L-1}\prod_{i=1}^n \nu((\mathcal{E}_{l}^{(i)})^{-1}),
    \end{eqnarray}
    implying the exponential growth with both qubit count n and depth L.
\end{thm}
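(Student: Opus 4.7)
The plan is to sharpen Lemma~\ref{lem_4} by retaining the explicit dependence on the Pauli transfer matrix of the effective noise and averaging it over the random unitaries, using only the second-moment Weingarten formulas of a unitary $2$-design reviewed in Sec.~\ref{app:haar_integration}. Starting from Lemma~\ref{lem_4} applied to a single copy of the noisy state, one obtains the per-realization lower bound $N \geq (\beta/\varepsilon^{2})\,\norm{(A')^{-T}\vb*{x}}^{2}$, where $A'$ is the Pauli transfer matrix of the compiled effective noise $\mathcal{E}'$. Averaging over the random circuit and invoking the operator-convex inequality $\mathbb{E}[X^{-1}] \succeq (\mathbb{E}[X])^{-1}$ where necessary reduces the task to computing the Haar average of a low-degree polynomial in the random PTMs $R_{l}$.

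The key structural observation is that, after cancelling the trivial factor $R_{1} R_{1}^{-1}$ in the compiled $A'$, each random $R_{l}$ ($l=2,\ldots,L$) appears only twice, so the relevant second-moment object is quadratic in each $R_{l}$ and accessible via a unitary $2$-design. Applying the twirling identity $\mathbb{E}_{U}[R_{U} M R_{U}^{-1}] = (\mathrm{tr}_\perp[M]/(4^{n}-1))\Pi_\perp + M_{00}\,\ketbra{0}$ recursively from the innermost $R_{2}$ outward, and exploiting unitality of each $\mathcal{E}_{l}$ to preserve the identity block across layers, the recursion produces a product of per-layer multipliers $\mathrm{tr}_\perp[A_{l}^{-T} A_{l}^{-1}]/(4^{n}-1)$. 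This Pauli-transfer quantity is then rewritten in terms of the noise strength via the Choi-matrix identity $\mathrm{tr}[J(\mathcal{E})^{2}] = \norm{A}_{F}^{2}$, derived from the Pauli expansion of $J(\mathcal{E})$, together with the multiplicativity $\nu(\bigotimes_{i}\mathcal{E}^{(i)}) = \prod_{i}\nu(\mathcal{E}^{(i)})$ of the local noise, giving the advertised per-layer factor $(4^{n}\prod_{i}\nu((\mathcal{E}_{l}^{(i)})^{-1}) - 1)/(4^{n}-1)$; the offset $-(2^{n}-2)/(4^{n}-1)$ arises from the surviving identity-subspace contributions at the boundary of the recursion.

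The principal obstacle is the careful alignment of the two one-sided inequalities at play: the quantum Cram\'{e}r--Rao inequality gives a lower bound on $N$ in terms of an inverse quantum Fisher information, while Jensen's operator inequality gives a lower bound on the expectation of that inverse in terms of the inverse of its expectation, and chaining these must preserve the direction of the final bound on $\mathbb{E}[N]$. A secondary subtlety is bookkeeping the identity-subspace $\ketbra{0}$ block through the layer-by-layer Haar twirl, since this is what yields the subtracted offset $(2^{n}-2)/(4^{n}-1)$ in the final expression. Once these points are settled, the exponential growth in both $n$ and $L$ follows automatically: under condition~(I) one has $\nu((\mathcal{E}_{l}^{(i)})^{-1}) > 1$ for every local factor, so the product over the $n$ qubits grows exponentially in $n$, and the product over the $L-1$ layers gives the depth dependence, yielding the asymptotic $\mathbb{E}[N] \sim (\norm{\vb*{x}}^{2}/\varepsilon^{2}) \prod_{l=1}^{L-1} \prod_{i} \nu((\mathcal{E}_{l}^{(i)})^{-1})$ stated in the theorem.
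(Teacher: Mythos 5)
Your central technical claim---that after cancelling $R_1$ ``each random $R_l$ appears only twice, so the relevant second-moment object is quadratic in each $R_l$ and accessible via a unitary $2$-design''---is where the argument breaks. Each $R_l$ ($l\geq 2$) does appear twice in $(A')^{-1}$, but your target quantity is the \emph{squared} norm $\norm{(A')^{-T}\vb*{x}}^2 = \vb*{x}^T (A')^{-1}(A')^{-T}\vb*{x}$, in which each $R_l$ appears four times, i.e.\ the quantity has degree $(4,4)$ in $U_l$. Moreover the rank-one pinning by $\vb*{x}\vb*{x}^T$ blocks the cyclic cancellation that could reduce the degree: the two sandwiches $R_l(\cdot)R_l^T$ are separated by non-orthogonal noise blocks $A^{-1}A^{-T}$, so no pair of $R_l$'s cancels and your one-layer twirl identity cannot legitimately be ``applied recursively from the innermost $R_2$ outward''. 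A $2$-design simply does not determine this average (fourth moments of $U_l$ enter). The paper's proof evades exactly this obstruction by a specific ordering of the averages, which your plan lacks: it starts not from Lemma~\ref{lem_4} (so no factor $\beta$ appears---note the theorem's stated bound contains no $\beta$, whereas your per-realization bound does) but from QFI monotonicity plus the exact variance identity of Eq.~(10) of Ref.~\cite{watanabe2010optimal}; it then takes the \emph{first} moment over $U_1$ and uses unitality to collapse the state to $\hat{I}/2^n$, and the \emph{second} moment over $U_L$ to convert the $\vb*{x}$-pinned quadratic form into the unpinned Hilbert--Schmidt functional $\nu(\mathcal{N}_{L-1}^{-1})$. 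Only for that full trace does cyclicity of the trace cancel half of the unitaries, making the layer recursion $\mathbb{E}_{U_l}[\nu(\mathcal{N}_l^{-1})] = \frac{d^2\nu(\mathcal{E}_l^{-1})-1}{d^2-1}\qty(\nu(\mathcal{N}_{l-1}^{-1})-\frac{1}{d^2})+\frac{1}{d^2}$ (with $d=2^n$) a genuine second-moment computation. Relatedly, the offset $-(2^n-2)/(4^n-1)$ is not purely ``surviving identity-subspace contributions'': it combines the $-1/d^2$ corrections of the recursion with the subtracted term $\mathbb{E}_{U_L}\qty[\mathrm{tr}[\hat{\rho}\hat{X}]^2]=\mathrm{tr}[\hat{X}^2]/(d(d+1))$, which uses purity of $\hat{\rho}_0$ and which your plan never computes.

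Your Jensen fallback $\mathbb{E}[X^{-1}]\succeq(\mathbb{E}[X])^{-1}$ has the correct direction and would indeed restore $2$-design computability, because in $\mathbb{E}[A'A'^T]$ the redundant unitaries cancel by orthogonality of the $R_l$ and the twirls nest properly---but it provably yields a weaker constant. The per-layer factor it produces is $(4^n-1)/\mathrm{tr}[A_l A_l^T]$ in the \emph{forward} PTM, and by Cauchy--Schwarz on the singular values of $A_l$ one has $(4^n-1)/\mathrm{tr}[A_l A_l^T]\leq \mathrm{tr}[A_l^{-1}A_l^{-T}]/(4^n-1) = (4^n\prod_i\nu((\mathcal{E}_l^{(i)})^{-1})-1)/(4^n-1)$, with equality only when all singular values coincide (depolarizing-like noise). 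So for generic unital local noise such as dephasing, the Jensen route establishes exponential growth in $n$ and $L$ (the two factors agree to first order in the error rate) but does not prove the inequality as stated in the theorem.
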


\begin{proof}
    Let us define $\mathcal{N}_l = \mathcal{E}_l\circ\mathcal{U}_l\circ\cdots\circ\mathcal{E}_1\circ\mathcal{U}_1\circ\mathcal{U}_i^\dagger\circ\cdots\circ\mathcal{U}_l^\dagger$.
    From the monotonicity of quantum Fisher information matrix and Eq. (10) of Refs.~\cite{watanabe2010optimal}, we can lower bound the average of the cost $N$ as
    \begin{eqnarray}
        \mathbb{E}_{U_1\cdots U_L}[N]
        \geq \frac{1}{\varepsilon^2}\mathbb{E}_{U_1\cdots U_L}\qty[\mathrm{tr}[\mathcal{E}_{L-1}\circ\mathcal{U}_{L-1}\circ\cdots\circ\mathcal{E}_1\circ\mathcal{U}_1(\hat{\rho}_0)((\mathcal{N}^{\dagger}_{L-1})^{-1}\circ\mathcal{U}_L^\dag(\hat{X}))^2] - \mathrm{tr}[\hat{\rho} \hat{X}]^2].
    \end{eqnarray}
    
    We first evaluate the second term.
    By defining $\hat{\rho}' = \mathcal{U}_{L-1}\circ\cdots\circ\mathcal{U}_1(\hat{\rho}_0)$ and using the second-moment formula of the Haar integral for $U_L$, we can compute it explicitly as
    \begin{eqnarray}
        \mathbb{E}_{U_1\cdots U_L}\qty[\mathrm{tr}[\mathcal{U}_L\circ\cdots\circ\mathcal{U}_1(\hat{\rho}_0) \hat{X}]^2]
        &=&
        \mathbb{E}_{U_1\cdots U_L} \left[
        \mathrm{tr}[U_L \hat{\rho}' U_L^\dag \hat{X}]^2
        \right]
        \\
        &=&
        \mathbb{E}_{U_1\cdots U_{L-1}}\left[
        \left(
        \frac{\mathrm{tr}[\hat{\rho}'^2]}{d^2 - 1}
        -\frac{\mathrm{tr}[\hat{\rho}']^2}{d(d^2-1)}
        \right)\mathrm{tr}[\hat{X}^2]
        \right]\\
        &=& \frac{1}{d(d+1)}\mathrm{tr}[\hat{X}^2],
    \end{eqnarray}
    where we used $\mathrm{tr}[X]=0$ in the second equation and $\tr[\rho'^2] = 1$ for pure state $\hat{\rho}'$ to obtain the last equation.

    To evaluate the first term, we first make use of the first-moment formula of the Haar integral for $U_1$ as
    \begin{eqnarray}
        &&\mathbb{E}_{U_1\cdots U_L}\qty[\mathrm{tr}[\mathcal{E}_{L-1}\circ\mathcal{U}_{L-1}\circ\cdots\circ\mathcal{E}_1\circ\mathcal{U}_1(\hat{\rho}_0)((\mathcal{N}^{\dagger}_{L-1})^{-1}\circ\mathcal{U}_L^\dag(\hat{X}))^2]] \\
        &=& \mathbb{E}_{U_2\cdots U_L}\qty[\mathrm{tr}[\mathcal{E}_{L-1}\circ\mathcal{U}_{L-1}\circ\cdots\circ\mathcal{E}_1(\mathbb{E}_{U_1}[\mathcal{U}_1(\hat{\rho}_0)])((\mathcal{N}^{\dagger}_{L-1})^{-1}\circ\mathcal{U}_L^\dag(\hat{X}))^2]]\\
        &=&\mathbb{E}_{U_2\cdots U_L}\qty[\mathrm{tr}[\mathcal{E}_{L-1}\circ\mathcal{U}_{L-1}\circ\cdots\circ\mathcal{E}_1(\hat{I}/d)((\mathcal{N}^{\dagger}_{L-1})^{-1}\circ\mathcal{U}_L^\dag(\hat{X}))^2]]\\
        &=& \frac{1}{d}\mathbb{E}_{U_2\cdots U_L}\qty[\mathrm{tr}[((\mathcal{N}^{\dagger}_{L-1})^{-1}\circ\mathcal{U}_L^\dag(\hat{X}))^2]],
    \end{eqnarray}
    where we used the unitality of the noise channel $\mathcal{E}_l$ in the third equation.
    Next, by expanding $\mathcal{N}_{L-1}^{-1}$ as $\mathcal{N}_{L-1}^{-1}(\cdot) = \sum_i s_iN_i \cdot N_i^\dag$, we can use the second-moment formula of the Haar integral for $U_{L}$ as
    \begin{eqnarray}
        \frac{1}{d}\mathbb{E}_{U_L}\qty[\mathrm{tr}[((\mathcal{N}^{\dagger}_{L-1})^{-1}\circ\mathcal{U}_L^\dag(\hat{X}))^2]]
        &=& \frac{1}{d}\sum_{ij}s_is_j\mathbb{E}_{U_L}\qty[\mathrm{tr}[(N_i^\dag U_L^\dag\hat{X}U_LN_i)(N_j^\dag U_L^\dag\hat{X}U_LN_j)]] \\
        &=& \frac{1}{d}\sum_{ij}s_is_j\mathbb{E}_{U_L}\qty[\mathrm{tr}[(U_L^\dag\hat{X}U_LN_iN_j^\dag U_L^\dag\hat{X}U_LN_jN_i^\dag ]] \\
        &=&\frac{1}{d}\sum_{ij}s_is_j \qty(\frac{\mathrm{tr}[N_iN_j^\dag]\mathrm{tr}[N_jN_i^\dag]\mathrm{tr}[\hat{X}^2]}{d^2-1} - \frac{\mathrm{tr}[N_iN_j^\dag N_jN_i^\dag]\mathrm{tr}[\hat{X}^2]}{d(d^2-1)})\\
        &=& \frac{d}{d^2-1} \qty(\nu(\mathcal{N}^{-1}_{L-1}) - \frac{1}{d^2})\mathrm{tr}[\hat{X}^2],
    \end{eqnarray}
    where we used $\mathrm{tr}[X]=0$ in the third equation.
    Therefore, the average of the cost $N$ can be lower bounded as
    \begin{eqnarray}
        \mathbb{E}_{U_1\cdots U_L}[N] 
        \geq  \frac{1}{\varepsilon^2}\frac{d}{d^2-1}\mathrm{tr}[\hat{X}^2] \qty(\mathbb{E}_{U_2\cdots U_{L-1}}[\nu(\mathcal{N}^{-1}_{L-1})] - \frac{1}{d^2} - \frac{d-1}{d^2}).
        \label{eq_Nav}
    \end{eqnarray}
    
Next, we evaluate $\mathbb{E}_{U_l}[\nu(\mathcal{N}_l^{-1})]$.
Let us assume that $\mathcal{E}_{l}^{-1}$ and $\mathcal{N}_{l-1}^{-1}$ can be expanded as $\mathcal{E}_{l}^{-1}(\cdot) = \sum_ip_iE_i\cdot E_i^\dag$ and $\mathcal{N}_{l-1}^{-1}(\cdot) = \sum_i s_iN_i \cdot N_i^\dag$.
Then, since $\mathcal{N}_l^{-1} = \mathcal{U}_l\circ\mathcal{N}_{l-1}^{-1}\circ\mathcal{U}_l^\dag\circ\mathcal{E}_l^{-1}$, we can expand $\mathcal{N}_l^{-1}$ as $\mathcal{N}_l^{-1}(\cdot) = \sum_{ij} p_i s_j U_l N_j U_l^\dag E_i \cdot E_i^\dag U_l N_j^\dag U_l^\dag$.
Thus, we can evaluate $\mathbb{E}_{U_l}[\nu(\mathcal{N}_l^{-1})]$ as
\begin{eqnarray}
    \mathbb{E}_{U_l}[\nu(\mathcal{N}_l^{-1})]
    &=& \sum_{ijkl}p_ip_ks_js_l\mathbb{E}_{U_l}\qty[\mathrm{tr}[U_l N_j U_l^\dag E_i E_k^\dag U_l N_l^\dag U_l^\dag]\mathrm{tr}[U_l N_l U_l^\dag E_k E_i^\dag U_l N_j^\dag U_l^\dag]]/d^2 \\
    &=& \sum_{ijkl}p_ip_ks_js_l\mathbb{E}_{U_l}\qty[\mathrm{tr}[U_l N_l^\dag N_j U_l^\dag E_i E_k^\dag]\mathrm{tr}[U_l N_j^\dag N_l U_l^\dag E_k E_i^\dag ]]/d^2 \\
    &=& \sum_{ijkl}p_ip_ks_js_l \biggl(\frac{\mathrm{tr}[N_l^\dag N_jN_j^\dag N_l]\mathrm{tr}[E_i E_k^\dag E_k E_i^\dag] + \mathrm{tr}[N_l^\dag N_j]\mathrm{tr}[N_j^\dag N_l]\mathrm{tr}[E_i E_k^\dag] \mathrm{tr}[E_k E_i^\dag]}{d^2(d^2 -1)} \\
    &&- \frac{\mathrm{tr}[N_l^\dag N_jN_j^\dag N_l]\mathrm{tr}[E_i E_k^\dag] \mathrm{tr}[E_k E_i^\dag] + \mathrm{tr}[N_l^\dag N_j]\mathrm{tr}[N_j^\dag N_l]\mathrm{tr}[E_i E_k^\dag E_k E_i^\dag]}{d^3(d^2 - 1)}\biggr)\\
    &=& \frac{d^2+d^4\nu(\mathcal{N}_{l-1}^{-1})\nu(\mathcal{E}_{l}^{-1})}{d^2(d^2-1)} - \frac{d^3\nu(\mathcal{E}_{l}^{-1}) + d^3\nu(\mathcal{N}_{l-1}^{-1})}{d^3(d^2-1)}\\
    &=& \frac{d^2\nu(\mathcal{E}_{l}^{-1})-1}{d^2-1}\qty(\nu(\mathcal{N}_{l-1}^{-1})-\frac{1}{d^2}) + \frac{1}{d^2}.
    \label{eq_nu}
\end{eqnarray}

By combining Eq. (\ref{eq_Nav}) and Eq. (\ref{eq_nu}), we obtain the following inequality:
\begin{eqnarray}
    \mathbb{E}_{U_1\cdots U_L}[N]
    &\geq& \frac{1}{\varepsilon^2}\frac{d}{d^2-1}\mathrm{tr}[\hat{X}^2] \qty(\frac{d^2-1}{d^2}\prod_{l=1}^{L-1} \qty(\frac{d^2\nu(\mathcal{E}_{l}^{-1})-1}{d^2-1}) - \frac{d-2}{d^2}) \\
    &=& \frac{\norm{\vb*{x}}^2}{\varepsilon^2}\qty(\prod_{l=1}^{L-1} \qty(\frac{d^2\nu(\mathcal{E}_{l}^{-1})-1}{d^2-1}) - \frac{d-2}{d^2-1})\\
    &=& \frac{\norm{\vb*{x}}^2}{\varepsilon^2}\qty(\prod_{l=1}^{L-1} \qty(\frac{4^n\prod_{i=1}^n\nu((\mathcal{E}_{l}^{(i)})^{-1})-1}{4^n-1}) - \frac{2^n-2}{4^n-1})\\
    &\sim& \frac{\norm{\vb*{x}}^2}{\varepsilon^2}\prod_{l=1}^{L-1}\prod_{i=1}^n \nu((\mathcal{E}_{l}^{(i)})^{-1}).
\end{eqnarray}
\end{proof}

Next, let us discuss the non-unital case.
For inverse map $\mathcal{E}^{-1}(\cdot) = \sum_i p_i E_i \cdot E_i^\dagger$, we define another quantity characterizing the noise channel as 
\begin{eqnarray}
    \eta(\mathcal{E}^{-1}) = \mathrm{tr}[(\mathcal{E}^{-1}(\hat{I}))^2] /d^2 = \sum_{ij}p_ip_j\mathrm{tr}[E_iE_i^\dag E_jE_j^\dag] /d^2.
\end{eqnarray}
From definition, $\eta(\mathcal{E}^{-1})$ is also multiplicative: for local noise channel $\mathcal{E} = \bigotimes_{i=1}^n \mathcal{E}^{(i)}$, 
\begin{eqnarray}
    \eta(\mathcal{E}^{-1}) = \prod_{i=1}^n \eta((\mathcal{E}^{(i)})^{-1}).
\end{eqnarray}
For simplicity, let us assume the noise to be homogeneous such shat  $\mathcal{E}_l = \mathcal{E} = \bigotimes_{i=1}^n \mathcal{E}^{(0)}$ for all $l=1,\cdots,L$.
Then, we obtain the following theorem for non-unital noise.
\begin{thm}
    \label{thm_4}
    Let $U_1, U_2, ..., U_{L-1}, U_L$ be $n$-qubit unitary gate drawn from a set of random unitary that form unitary 2-design, and $\mathcal{E} = \bigotimes_{i=1}^N \mathcal{E}^{(0)}$ be  a local noise channel satisfying the condition (I) and (II) in the Theorem \ref{thm_1}.
    Then, \red{the average of the number of copies of $\rho_{\mathrm{noisy}}$} required to perform unbiased estimation of $\ev*{\hat{X}}$ with standard deviation $\varepsilon$ over $\{U_1, ..., U_L\}$ in the above setup is lower bounded as
    \begin{eqnarray}
    \label{eq_thm_4}
        \mathbb{E}_{U_1\cdots U_L}[N]
    &\geq& O\qty(\qty(\frac{4^n\nu((\mathcal{E}^{(0)})^{-1})^n-2^n\eta((\mathcal{E}^{(0)})^{-1})^n}{4^n-1})^L).
    \end{eqnarray}
\end{thm}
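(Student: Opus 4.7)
The plan is to follow the blueprint of Theorem~\ref{thm_3} while retaining the non-unital contribution that vanished there. Define $\rho_{L-1}:=\mathcal{E}_{L-1}\circ\mathcal{U}_{L-1}\circ\cdots\circ\mathcal{E}_1\circ\mathcal{U}_1(\hat{\rho}_0)$, $O:=(\mathcal{N}_{L-1}^\dag)^{-1}\circ\mathcal{U}_L^\dag(\hat{X})$, and $F_l:=\mathcal{N}_l^{-1}(\hat{I})$, the last being $\hat{I}$ exactly when $\mathcal{N}_l$ is unital and otherwise carrying the non-unital signature $\mathrm{tr}[F_l^{\,2}]/d^2=\eta(\mathcal{N}_l^{-1})$. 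Monotonicity of the SLD Fisher information together with Eq.~(10) of \cite{watanabe2010optimal} gives
\begin{align}
    \mathbb{E}_{U_1\cdots U_L}[N] \ge \frac{1}{\varepsilon^2}\,\mathbb{E}\!\left[\mathrm{tr}[\rho_{L-1}O^2]-\mathrm{tr}[\hat{\rho}\hat{X}]^2\right],
\end{align}
and the subtracted term evaluates to $\mathrm{tr}[\hat{X}^2]/[d(d+1)]$ via the Haar second moment for $U_L$ (the computation uses only $\mathrm{tr}[\hat{X}]=0$ and is identical to the one carried out for Theorem~\ref{thm_3}).

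The new content lies in the first term. I would integrate out $U_L$ first, since it appears only inside $O$. Expanding $\mathcal{N}_{L-1}^{-1}(\cdot)=\sum_i s_i N_i\cdot N_i^\dag$ and applying the second-moment Weingarten formula yields
\begin{align}
    \mathbb{E}_{U_L}[O^2]=\frac{\mathrm{tr}[\hat{X}^2]}{d^2-1}M_{L-1}-\frac{\mathrm{tr}[\hat{X}^2]}{d(d^2-1)}F_{L-1}^{\,2},
\end{align}
where $M_{L-1}:=\sum_{ij}s_is_j\,\mathrm{tr}[N_i^\dag N_j]\,N_iN_j^\dag$ satisfies $\mathrm{tr}[M_{L-1}]=d^2\nu(\mathcal{N}_{L-1}^{-1})$. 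In the unital case $F_{L-1}=\hat{I}$, so the second contribution collapses to a $\rho$-independent constant and the single-scalar recursion on $\nu$ used in Theorem~\ref{thm_3} is recovered; in general, both $\mathbb{E}[\mathrm{tr}[\rho_l M_l]]$ and $\mathbb{E}[\mathrm{tr}[\rho_l F_l^{\,2}]]$ must be propagated through the circuit.

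To carry out this propagation, I would peel one layer off at a time using $\rho_l=\mathcal{E}_l(U_l\rho_{l-1}U_l^\dag)$ and $\mathcal{N}_l^{-1}=\mathcal{U}_l\circ\mathcal{N}_{l-1}^{-1}\circ\mathcal{U}_l^\dag\circ\mathcal{E}_l^{-1}$, integrating over $U_l$ with the second-moment Haar formula. The local tensor-product form $\mathcal{E}=(\mathcal{E}^{(0)})^{\otimes n}$ and the multiplicativity of $\nu$ and $\eta$ keep the layer-to-layer transfer coefficients stationary, and the dominant rate per layer should turn out to be
\begin{align}
    \frac{d^2\nu^n-d\eta^n}{d^2-1}=\frac{4^n\nu((\mathcal{E}^{(0)})^{-1})^n-2^n\eta((\mathcal{E}^{(0)})^{-1})^n}{4^n-1}.
\end{align}
Iterating over the $L$ layers and combining with the subtracted constant then reproduces the scaling stated in Theorem~\ref{thm_4}.

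The main obstacle is verifying that the layer-to-layer recursion truly closes on these two scalars. In Theorem~\ref{thm_3} the ideal state collapses to $\hat{I}/d$ after the first-moment Haar average because each $\mathcal{E}_l$ is unital, so only $\nu$ need be tracked; in the present non-unital setting the offset $\vb*{c}_l$ seeds additional Pauli components in $\rho_l$ which must be shown to re-enter only through $M_l$-type and $F_l^{\,2}$-type operators after the 2-design average over $U_l$. The homogeneity assumption $\mathcal{E}_l=\mathcal{E}$ is what keeps the resulting transfer matrix stationary across layers, so that the product of rates telescopes into the clean $L$-th power appearing in Eq.~\eqref{eq_thm_4}.
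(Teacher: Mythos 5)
Your skeleton matches the paper's up to exactly the point you flag as the ``main obstacle,'' and that obstacle is where the paper's proof contains an idea your proposal is missing: condition (II) is used not to justify a coupled state--channel recursion but to \emph{eliminate the state before any Haar averaging}. Since the last map producing $\rho_{L-1}=\mathcal{E}_{L-1}\circ\mathcal{U}_{L-1}\circ\cdots\circ\mathcal{E}_1\circ\mathcal{U}_1(\hat{\rho}_0)$ is the noise $\mathcal{E}$, condition (II) supplies $\beta>0$ with $\rho_{L-1}\geq\beta\hat{I}/d$, and since $O^2\geq 0$ this gives $\mathrm{tr}[\rho_{L-1}O^2]\geq\frac{\beta}{d}\mathrm{tr}[O^2]$ pointwise. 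After this step only channel-intrinsic scalars remain, and the paper closes the linear recursion
\begin{align}
\mathbb{E}_{U_l}[\nu(\mathcal{N}_l^{-1})] &= \frac{d^2\nu(\mathcal{E}^{-1})-1}{d^2-1}\,\nu(\mathcal{N}_{l-1}^{-1}) + \frac{d-d\nu(\mathcal{E}^{-1})}{d^2-1}\,\eta(\mathcal{N}_{l-1}^{-1}),\\
\mathbb{E}_{U_l}[\eta(\mathcal{N}_l^{-1})] &= \frac{d^2\eta(\mathcal{E}^{-1})-d}{d^2-1}\,\nu(\mathcal{N}_{l-1}^{-1}) + \frac{d^2-d\eta(\mathcal{E}^{-1})}{d^2-1}\,\eta(\mathcal{N}_{l-1}^{-1}),
\end{align}
whose transfer matrix has eigenvalues $1$ and $(d^2\nu(\mathcal{E}^{-1})-d\eta(\mathcal{E}^{-1}))/(d^2-1)$ --- precisely the rate you guessed. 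Your alternative of propagating $\mathbb{E}[\mathrm{tr}[\rho_l M_l]]$ and $\mathbb{E}[\mathrm{tr}[\rho_l F_l^{\,2}]]$ cannot close under the stated hypotheses: $\rho_l$ and $\mathcal{N}_l^{-1}$ are built from the \emph{same} unitaries, and $M_l$ is quadratic in the expansion of $\mathcal{N}_l^{-1}$, so the joint average over $U_l$ involves up to five occurrences of $U_l,U_l^*$, i.e.\ third and higher moments, which a unitary 2-design does not determine. The $\beta$-trick is what keeps every average at second moments so that the recursion genuinely closes on the pair $(\nu,\eta)$; without it your program would require strengthening the assumption beyond 2-designs.

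There is also a technical error in your $U_L$ average. With the correct expansion $(\mathcal{N}_{L-1}^{\dagger})^{-1}(\cdot)=\sum_i s_i N_i^{\dagger}\cdot N_i$ (the adjoint of $\mathcal{N}_{L-1}^{-1}(\cdot)=\sum_i s_i N_i\cdot N_i^{\dagger}$), the subtracted operator in $\mathbb{E}_{U_L}[O^2]$ is $\sum_{ij}s_is_j N_i^{\dagger}N_i N_j^{\dagger}N_j=\hat{I}$, because trace preservation of $\mathcal{N}_{L-1}^{-1}$ forces $\sum_i s_i N_i^{\dagger}N_i=\hat{I}$ even for non-unital noise; your $F_{L-1}^{\,2}$ term arises from flipping the daggers. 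Consequently non-unitality does not enter at the $U_L$ step at all --- $\eta$ appears only in the layer recursion above, where $\sum_j s_j N_j N_j^{\dagger}=F_{l-1}\neq\hat{I}$ replaces the identity and produces $\mathrm{tr}[F_{l-1}^{\,2}]=d^2\eta(\mathcal{N}_{l-1}^{-1})$. Your final exponential rate is correct, but it is obtained by the paper's mechanism ($\beta$-decoupling plus the channel-only $2\times 2$ recursion), not by the state-dependent propagation you sketch.
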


\begin{proof}
    Since we assume condition (II) for noise $\mathcal{E}$, 
    there exists $\beta>0$ such that $\mathcal{E}(\rho) - \beta \hat{I}/d>0$ for all $n$-qubit state $\hat{\rho}$.
    Thus, in the same way as in the proof of Theorem \ref{thm_3}, we can lower bound the average of the cost $N$ as 
    \begin{eqnarray}
        \mathbb{E}_{U_1\cdots U_L}[N] 
        \geq  \frac{1}{\varepsilon^2}\frac{d}{d^2-1}\mathrm{tr}[\hat{X}^2] \qty(\beta\mathbb{E}_{U_2\cdots U_{L-1}}[\nu(\mathcal{N}^{-1}_{L-1})] - \frac{\beta}{d^2} - \frac{d-1}{d^2}).
    \end{eqnarray}

    Next, let us analyze $\mathbb{E}_{U_l}[\nu(\mathcal{N}^{-1}_{l})]$ and $\mathbb{E}_{U_l}[\eta(\mathcal{N}^{-1}_{l})]$.
    In the same way as in the proof of Theorem \ref{thm_3}, we obtain
    \begin{eqnarray}
        \mathbb{E}_{U_l}[\nu(\mathcal{N}^{-1}_{l})] 
        &=& \frac{d^2\nu(\mathcal{E}^{-1})-1}{d^2-1}\nu(\mathcal{N}^{-1}_{l-1}) + \frac{d-d\nu(\mathcal{E}^{-1})}{d^2-1}\eta(\mathcal{N}^{-1}_{l-1}),\\
        \mathbb{E}_{U_l}[\eta(\mathcal{N}^{-1}_{l})] 
        &=& \frac{d^2\eta(\mathcal{E}^{-1})-d}{d^2-1}\nu(\mathcal{N}^{-1}_{l-1}) + \frac{d^2-d\eta(\mathcal{E}^{-1})}{d^2-1}\eta(\mathcal{N}^{-1}_{l-1}).
    \end{eqnarray}
    Since the eigenvalues of the matrix 
    \begin{eqnarray}
        \frac{1}{d^2-1}
        \begin{pmatrix}
            d^2\nu(\mathcal{E}^{-1})-1 & d-d\nu(\mathcal{E}^{-1}) \\
            d^2\eta(\mathcal{E}^{-1})-d & d^2-d\eta(\mathcal{E}^{-1}) \\
        \end{pmatrix}
    \end{eqnarray}
    are $1$ and $(d^2\nu(\mathcal{E}^{-1})-d\nu(\mathcal{E}^{-1}))/(d^2-1)$, we obtain
    \begin{eqnarray}
        \mathbb{E}_{U_2\cdots U_{L-1}}[\nu(\mathcal{N}^{-1}_{L-1})] = O\qty(\qty(\frac{d^2\nu(\mathcal{E}^{-1})-d\eta(\mathcal{E}^{-1})}{d^2-1})^L).
    \end{eqnarray}
    By using the multiplicativity of $\nu$ and $\eta$ under local noise, we finally obtain
    \begin{eqnarray}
        \mathbb{E}_{U_1\cdots U_L}[N]
    &\geq& O\qty(\qty(\frac{4^n\nu((\mathcal{E}^{(0)})^{-1})^n-2^n\eta((\mathcal{E}^{(0)})^{-1})^n}{4^n-1})^L).
    \end{eqnarray}
\end{proof}

\red{
We note that Theorem \ref{thm_2} (Theorem \ref{thm_3} and Theorem \ref{thm_4}) only state the result for QEM methods where we only utilize the copies of noisy layered circuits without any modification, such as virtual distillation or rescaling method we have presented here.
However, we believe that we can expand this result for more general QEM methods that can be described by the virtual circuit.
}

\section{Application of Theorem 2 to specific noise models}
In this section, we apply Theorem \ref{thm_2} (Theorem \ref{thm_3} and Theorem \ref{thm_4}) to specific noise models, namely, the local depolarizing noise, local dephasing noise, and the amplitude damping noise.

First, we consider the case where all unitary gates are followed by the local depolarizing noise $\mathcal{E}_l = \bigotimes_{i=1}^n \mathcal{E}_l^{(i)}$ with $\mathcal{E}_l^{(i)}(\hat{\rho}) = (1-\frac{3}{4}p)\hat{\rho} + \frac{1}{4}p(X\hat{\rho} X + Y\hat{\rho} Y + Z\hat{\rho} Z)$.
In terms of Bloch vector, $\mathcal{E}_l^{(i)}$ maps Bloch vector $\vb*{\theta}$ to $(1-p)\vb*{\theta}$.
Since the inverse of $\mathcal{E}_l^{(i)}$ can be written as $(\mathcal{E}_l^{(i)})^{-1}(\hat{\rho}) = \frac{4-p}{4-4p}\hat{\rho} - \frac{p}{4-4p}(X\hat{\rho} X + Y\hat{\rho} Y + Z\hat{\rho} Z)$, we obtain
\begin{eqnarray}
    \nu((\mathcal{E}_l^{(i)})^{-1}) = \qty(\frac{4-p}{4-4p})^2 + 3 \qty(\frac{p}{4-4p})^2 = 1 + \frac{3}{2}p + O(p^2).
\end{eqnarray}
Thus, the average cost is lower bounded as
\begin{eqnarray}
    \mathbb{E}_{U_1\cdots U_L}[N] \geq O\qty(\qty(1+\frac{3}{2}\frac{4^n}{4^n-1}p + O(p^2))^{nL}).
\end{eqnarray}
We have compared this bound in Fig.~\ref{fig_numerics}(b) in the main text with some QEM methods.

Next, we discuss the quantum circuit with local dephasing noise $\mathcal{E}_l = \bigotimes_{i=1}^n \mathcal{E}_l^{(i)}$ with $\mathcal{E}_l^{(i)}(\hat{\rho}) = (1-\frac{1}{2}p)\hat{\rho} + \frac{1}{2}pZ\hat{\rho} Z$.
In terms of Bloch vector, $\mathcal{E}_l^{(i)}$ maps Bloch vector $(\theta_x,\theta_y,\theta_z)$ to $((1-p)\theta_x,(1-p)\theta_y,\theta_z)$.
Since the inverse of $\mathcal{E}_l^{(i)}$ can be written as $(\mathcal{E}_l^{(i)})^{-1}(\hat{\rho}) = \frac{2-p}{2-2p}\hat{\rho} - \frac{p}{2-2p}Z\hat{\rho} Z$, we obtain
\begin{eqnarray}
    \nu((\mathcal{E}_l^{(i)})^{-1}) = 1 + p + O(p^2).
\end{eqnarray}
Thus, the average cost is lower bounded as
\begin{eqnarray}
    \mathbb{E}_{U_1\cdots U_L}[N] \geq O\qty(\qty(1+\frac{4^n}{4^n-1}p + O(p^2))^{nL}).
\end{eqnarray}
We compare this bound in Fig. \ref{fig_numerics_SM}  with some QEM methods.

Finally, let us think of the amplitude damping noise $\mathcal{E}_l = \bigotimes_{i=1}^n \mathcal{E}_l^{(i)}$ with $\mathcal{E}_l^{(i)}(\hat{\rho}) = E_1\hat{\rho} E_1^\dag + E_2\hat{\rho}E_2^\dag$, where
\begin{eqnarray}
    E_1 = \begin{pmatrix}
            1 & 0 \\
            0 & \sqrt{1-p} \\
          \end{pmatrix},\;\;
    E_2 = \begin{pmatrix}
            0 & \sqrt{p} \\
            0 & 0 \\
          \end{pmatrix}.
\end{eqnarray}
In terms of Bloch vector, $\mathcal{E}_l^{(i)}$ maps Bloch vector $(\theta_x,\theta_y,\theta_z)$ to $(\sqrt{1-p}\theta_x,\sqrt{1-p}\theta_y,(1-p)\theta_z+p)$.
Strictly speaking, this noise does not satisfy condition (II) in the main text, so we cannot apply Theorem \ref{thm_4}.
However, the state after the noise is applied is almost always full rank, so we can expect that the scaling in Theorem \ref{thm_4} also holds in this case.
We note that this assumption can also be justified \red{ when $\rho_{\mathrm noisy}$ is full rank} by the numerical result obtained in the next section.

Since the inverse of $\mathcal{E}_l^{(i)}$ can be written as $(\mathcal{E}_l^{(i)})^{-1}(\hat{\rho}) = E_1'\hat{\rho}(E_1')^\dag - E_2'\hat{\rho}(E_2')^\dag$ where
\begin{eqnarray}
    E_1' = \begin{pmatrix}
            1 & 0 \\
            0 & \frac{1}{\sqrt{1-p}} \\
          \end{pmatrix},\;\;
    E_2' = \begin{pmatrix}
            0 & \frac{\sqrt{p}}{1-p} \\
            0 & 0 \\
          \end{pmatrix},
\end{eqnarray}
we obtain
\begin{eqnarray}
    \nu((\mathcal{E}_l^{(i)})^{-1}) &=& 1 + p + O(p^2),\\
    \eta((\mathcal{E}_l^{(i)})^{-1}) &=& \frac{1}{2} + O(p^2).
\end{eqnarray}
Thus, we can expect that the average cost is lower bounded as
\begin{eqnarray}
    \mathbb{E}_{U_1\cdots U_L}[N] \geq O\qty(\qty(1+\frac{4^n}{4^n-1}p + O(p^2))^{nL}).
\end{eqnarray}
We have compared this bound in Fig.~\ref{fig_numerics}(c) in the main text with some QEM methods.

\begin{figure}[ht]
    \begin{center}
        \includegraphics[width=70mm]{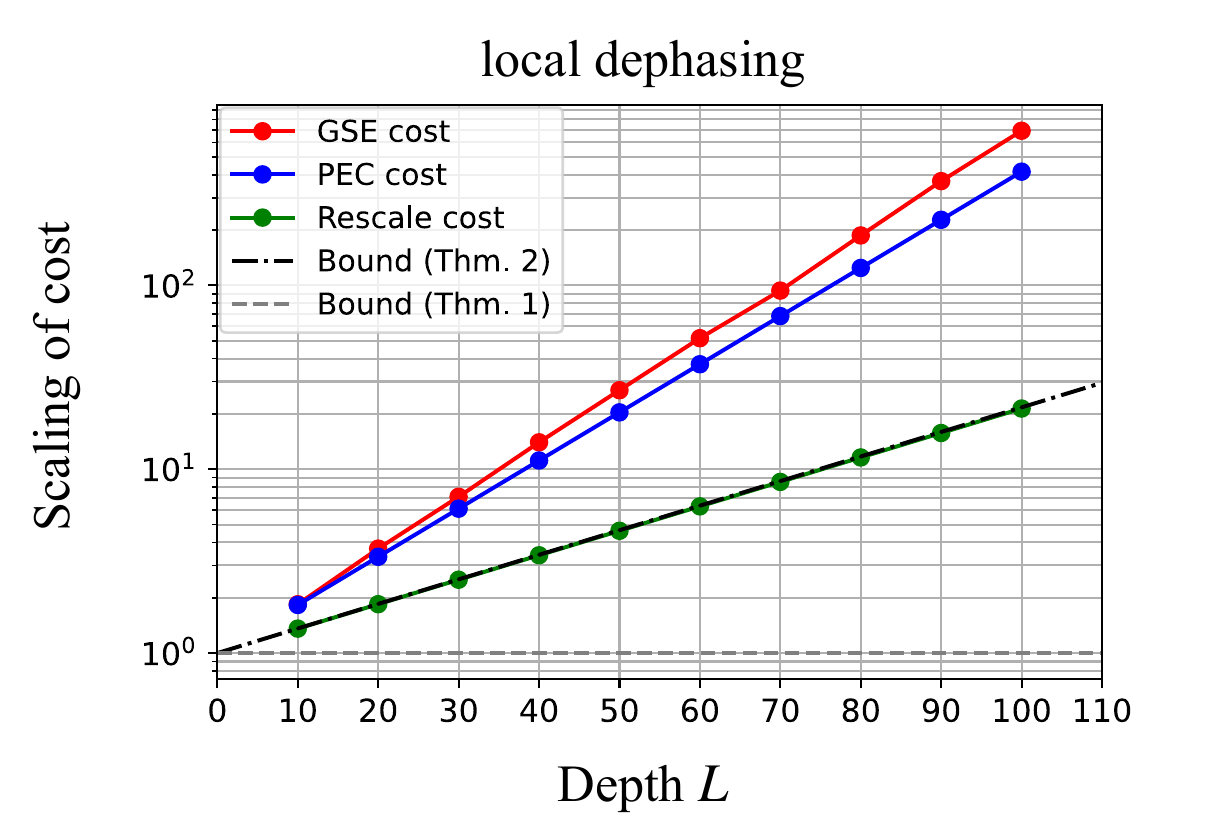}
        \caption{Scaling of the cost to perform QEM methods for random Clifford circuit of $n=3$ qubits under local dephasing noise with error rate $p=0.01$.
        The red, blue, and green lines denote the sampling overhead of generalized subspace expansion~\cite{yoshioka2022generalized, yang2023dual} using power subspace, the probabilistic error cancellation as derived in Ref.~\cite{takagi2020optimal}, and the rescaling technique as explained in the main text. The rescaling factor is $(1-p)^{-2nL4^{n-1}/(4^n-1)}$.
        Bound (Thm. 1) and Bound (Thm. 2) represent the lower bound of the cost obtained from Theorem 1 and Theorem 2, respectively.
        Note that GSE and the rescaling methods do not completely eliminate the errors, while we confirm significant reduction of bias.
        }
        \label{fig_numerics_SM}
    \end{center}
\end{figure}

\section{Convergence of noise to global depolarizing noise} \label{sec:circuit_local_dep}
In this section, we explain the details of our numerical simulation implying that the noise in the deep depth limit can be characterized by the global depolarizing noise.
As we have stated in the main text, we calculated the singular values of the unitial part of the Pauli transfer matrix of the effective noise channels $\mathcal{E}_m'$, \red{whose inverse upper bounds the quantum fisher information matrix of the circuit as stated in Lemma \ref{lem_1}}.
We have observed that, in various kinds of random circuits, the effect of each noise becomes indiscriminable from that of the global depolarizing noise for sufficiently large circuit depth $L$, and its noise strength can be given by the geometric mean of the singular values of the unitial part of the Pauli transfer matrix of each noise channel $\mathcal{E}_{lm}$.
This can be interpreted as the effects of noise being mixed and averaged out by the sequence of the noisy random gate.
In Fig.~\ref{fig_numerics_SM2} we have compared several different random circuit structures to find the generality of the phenomenon.
To be concrete, we have compared the hardware-efficient (HE) random circuit structure where 2-qubit gates (controlled-Z) can only be performed on two adjacent qubits under linear connectivity, a random circuit with each layer consisting of 2-qubit random unitary operators that act on randomly chosen pairs of sites (2Q random), and finally the Haar random unitary (Haar).
Surprisingly, the absolute error from the ideal value of $k$ determined from the geometric mean of $A_{l}$ is suppressed as $O(1/\sqrt{L})$ for either choice of the random circuit structure,
which implies the generality of this phenomenon.

We can obtain several important consequences from these results.
First, these results indicate the exponential decay of the quantum Fisher information matrix with respect to qubit count, meaning that the cost of QEM grows exponentially with qubit count.
Theorem \ref{thm_2} (Theorem \ref{thm_3},\ref{thm_4}) only states the results for the unitary 2-design, but our numerical results allow us to generalize this result even more.
Second, these results imply that just rescaling the measurement results may allow us to mitigate the effect of noise as in the case of global depolarizing noise.
Although we cannot obtain an unbiased estimator, we confirmed a reduction of the bias by orders of magnitude \red{through this method}.
What is more, the scaling of the cost of this rescaling method matches our average bound in the random circuit as shown in Fig. \ref{fig_numerics_SM} as well as Fig.~\ref{fig_numerics} in the main text.

\begin{figure}[ht]
    \begin{center}
        \includegraphics[width=0.95\linewidth]{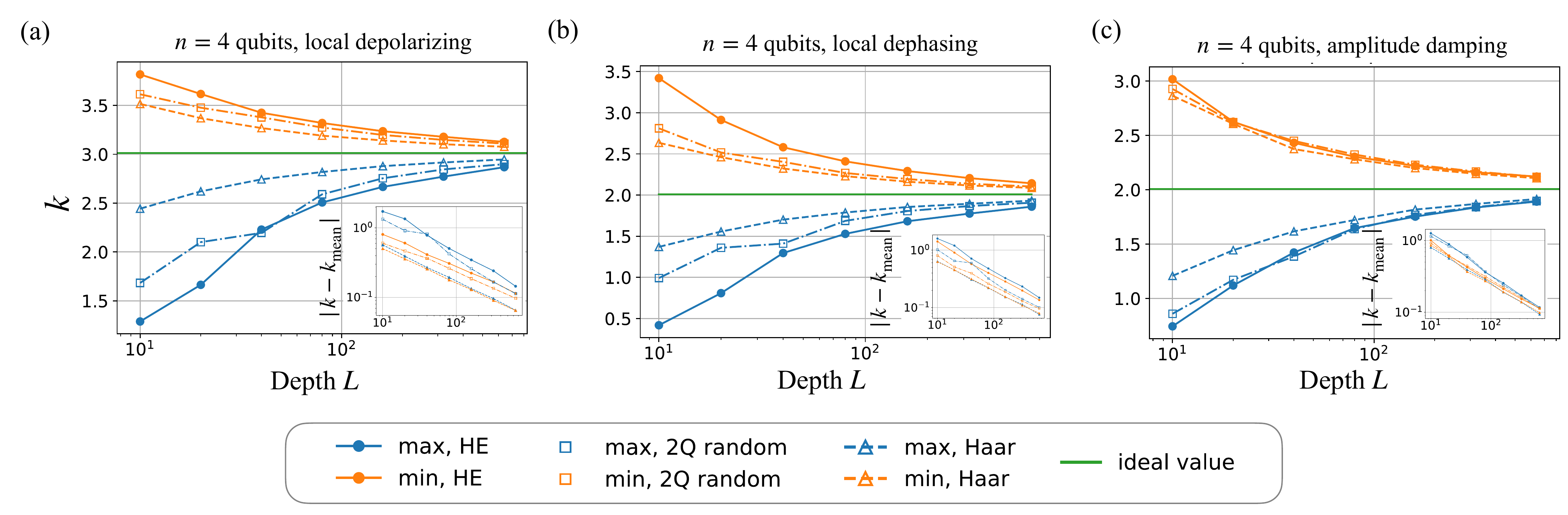}
        \caption{Convergence of (a) local depolarizing, (b) local dephasing, and (c) amplitude damping noise into global depolarizing noise under random circuits of $n=4$ qubits with error rate $p=0.0001$.
        We denote by $(1-p)^{kL}$ the singular values of the unital part of the Pauli transfer matrix for the effective noise channel at each depth $L$, where $k$ for the maximal and minimal ones are plotted in this figure. As is highlighted in the inset, we find that all $k$'s approach the geometric mean $k_{\rm mean}$ of the singular values \red{for each noise channel} with its fluctuation scaling as $O(1/\sqrt{L})$, implying the convergence to the global depolarizing noise.
        For instance, $k_{\rm mean} = 3n4^{n-1}/(4^n-1)$ for local depolarizing and $k_{\rm mean} = 2n4^{n-1}/(4^n-1)$ for amplitude damping.
        Here, we consider three class of random circuits: hardware-efficient ansatz with random parameters, 2-qubit random unitary between random pairs, and Haar random unitary}
        \label{fig_numerics_SM2}
    \end{center}
\end{figure}

\section{Notes on QEM method using some knowledge of noiseless states}
In this section, we discuss some caveats in applying our results to QEM methods which use some knowledge of noiseless states, such as subspace expansion \redd{or symmetry expansion}.
For the numerical simulation performed to obtain the cost of generalized subspace expansion~\cite{yoshioka2022generalized},  we optimize parameter $c_1$ and $c_2$ so that the error mitigated state $(c_1I/2^n + c_2\mathcal{E}'(\rho))(c_1I/2^n + c_2\mathcal{E}'(\rho))^\dagger = \abs{c_1}^2/4^nI + \mathrm{Re}(c_1c_2^*)/2^{n-1}\mathcal{E}'(\rho) + \abs{c_2}^2\mathcal{E}'(\rho)^{\redd{2}}$ reaches the noiseless state $\rho$, where $\mathcal{E}'$ represents the effective noise channel, and define $\qty(\abs{\mathrm{Re}(c_1c_2^*)}/2^{n-1} + \abs{c_2}^2)^2$ as the cost of performing generalized subspace expansion.
\redd{Here, we used that we can obtain the expectation value corresponding to $\mathcal{E}'(\rho)^{2}$ only from measurements of single copies of $\mathcal{E}'(\rho)^{2}$ by using dual-state purification~\cite{yang2023dual}.}
In the case of global depolarizing noise, the optimal values are $c_1 = -\frac{1-(1-p)^L}{(1-p)^L}$ and $c_2 = \frac{1}{(1-p)^L}$, which is independent of the noiseless state.

However, for other noise models such as local depolarizing noise or dephasing noise, optimal values depend on the noiseless state, which means that the POVM measurement $\mathcal{M}$ should also depend on the noiseless state.
This situation gets even worse when we include the optimizing process in the POVM measurement $\mathcal{M}$.
Therefore, generalized subspace expansion is actually outside the realm of our result, because $\mathcal{M}$ should not depend on the noiseless state when we apply quantum Cram\'{e}r-Rao inequality.
This is also the case for the usual subspace expansion.
We observe that the sampling cost after the optimization of parameters actually breaks our lower bound when the subspace is spanned to the original space.
Therefore, we must take care of applying our results to error mitigation methods such as subspace expansion, where the estimation method depends on the noiseless state we want to prepare.

\section{Comparison with related studies}
\red{
In this section, we briefly explain the differences between our work and some related studies.
Let us first introduce some related previous research.
Refs.~\cite{aharonov1996limitations} showed that the relative entropy $D$ between the output of the noisy circuit exposed to local depolarizing noise (which we have denoted as $\mathcal{E}'_m(\hat{\rho}(\vb*{\theta}))$) and the maximally mixed state $\hat{I}/2^n$ converges exponentially to 0 with respect to circuit depth as $D(\mathcal{E}'_m(\hat{\rho}(\vb*{\theta}))||\hat{I}/2^n)\leq n(1-p)^{L}$.
Thus, they concluded that any noisy quantum circuit without quantum error correction must have a size (= number of qubits) exponential in its depth in order to obtain meaningful samples.
In other words, we can only obtain meaningful samples from circuits whose depth scales as $L=O(\log(n))$.
Subsequently, Refs.~\cite{kastoryano2013quantum} proved that the convergence of relative entropy becomes quadratically faster as  $D(\mathcal{E}'_m(\hat{\rho}(\vb*{\theta}))||\hat{I}/2^n)\leq n(1-p)^{2L}$.
Refs.~\cite{takagi2022fundamental} applied these results to the analysis of QEM: they showed the exponential growth in the maximum estimator spread, or the range of possible values for the estimator.
Refs.~\cite{wang2021can} also derived similar results.
However, bounds on the maximum estimator spread only gave a sufficient condition for the cost of QEM, and thus there was no known lower bound on the sample complexity for unbiased QEM to our knowledge.
}

\red{
Along with our results, Refs.~\cite{takagi2022universal} and Refs.~\cite{quek2022exponentially} also characterized the resources needed for QEM.
By using the convergence of the relative entropy, they showed for generic noisy layered circuits that the sampling cost of QEM grows exponentially with the circuit depth.
Even though these results give the same scaling as our results obtained in Theorem \ref{thm_1}, our result provides a tighter and achievable bound,  which allows us to derive the optimal QEM method.
Refs.~\cite{quek2022exponentially} also obtained similar results to Theorem \ref{thm_2}.
However, our average bound for non-unital noise Eq. (\ref{eq_thm_4}) is quadratically tighter than the bound stated in Eq. (187) of  Refs.~\cite{quek2022exponentially} because of the factor $1/2$ in the exponent.
}

\end{document}